\documentclass[3p]{elsarticle}
\usepackage{amsmath,amssymb}
\usepackage{xcolor}
\usepackage{tikz}
\usetikzlibrary{matrix}
\usepackage{enumerate}
\usepackage{mathtools}

\journal{Nonlinear Analysis: Hybrid Systems}

\bibliographystyle{elsarticle-harv}\biboptions{authoryear}

\newcommand{\comment}[1]{}

\renewcommand{\qed}{\hfill \mbox{$\blacksquare$}}
\newcommand{\mer}{\hfill\ensuremath{\circ}}

\newcommand{\ki}{{\mathcal K}_\infty}

\def \R{{\mathbb R}}
\def \N{{\mathbb N}}

\def\K{\mathcal{K}}
\def\Ki{\mathcal{K}_{\infty}}
\def\C{\mathcal{C}}
\def\S{\mathcal{S}}

\def\KL{\mathcal{KL}}
\def\T{\mathcal{T}}

\def\U{\mathcal{U}}

\def\P{\mathcal{P}}
\def\comp{\,{\scriptstyle\circ}\,}
\def\id{\mathrm{id}}
\def\Ssup{\hat{\S}}
\def\Sup{\bar{\S}}
\def\Sdn{\underline{\S}}
\def\Sad{\S_{\textsc{ad}}}
\def\Srad{\S_{\textsc{rad}}}
\def\Sf{\S_{\textsc{F}}}
\def\Slim{\S_{\textsc{lim}}}
\def\usubset{\hspace{3pt}\mathrlap{\subset}\raisebox{1pt}{\hspace{2pt}$\scriptstyle u$}\hspace{4pt}}
\def\nn{\mathrm{n}}

\newtheorem{teo}{Theorem}[section]
\newtheorem{lem}[teo]{Lemma}
\newtheorem{defin}[teo]{Definition}
\newtheorem{prop}[teo]{Proposition}
\newtheorem{itremark}[teo]{Remark}

\newdefinition{example}[teo]{Example}
\newenvironment{remark}{\noindent\begin{itremark}\rm}{\end{itremark}}
\newproof{proof}{\textbf{Proof}}

\begin{document}

\begin{frontmatter}

\title{Uniform stability of nonlinear time-varying impulsive systems\\ with eventually uniformly bounded impulse frequency}


\author[ITBA]{Jos\'e L. Mancilla-Aguilar}
\ead{jmancill@itba.edu.ar}

\author[HHaddress]{Hernan Haimovich\corref{mycorrespondingauthor}}
\cortext[mycorrespondingauthor]{Corresponding author}
\ead{haimovich@cifasis-conicet.gov.ar}

\author[PFaddress]{Petro Feketa}
\ead{pf@tf.uni-kiel.de}

\address[ITBA]{Instituto Tecnol\'ogico de Buenos Aires, Av. E. Madero 399, Buenos Aires, Argentina.}
\address[HHaddress]{International Center for Information and Systems Science (CIFASIS), CONICET-UNR, Ocampo y Esmeralda, 2000 Rosario, Argentina.}
\address[PFaddress]{Kiel University, Chair of Automatic Control, Kaiserstra{\ss}e 2, 24143 Kiel, Germany}

\begin{abstract}
We provide novel sufficient conditions for stability of nonlinear and time-varying impulsive systems. These conditions generalize, extend, and strengthen many existing results. Different types of input-to-state stability (ISS), as well as zero-input global uniform asymptotic stability (0-GUAS), are covered by employing a two-measure framework and considering stability of both weak (decay depends only on elapsed time) and strong (decay depends on elapsed time and the number of impulses) flavors. By contrast to many existing results, the stability state bounds imposed are uniform with respect to initial time and also with respect to classes of impulse-time sequences where the impulse frequency is eventually uniformly bounded. We show that the considered classes of impulse-time sequences are substantially broader than other previously considered classes, such as those having fixed or (reverse) average dwell times, or impulse frequency achieving uniform convergence to a limit (superior or inferior). Moreover, our sufficient conditions are not more restrictive than existing ones when particularized to some of the cases covered in the literature, and hence in these cases our results allow to strengthen the existing conclusions.



\end{abstract}

\begin{keyword}
  Impulsive systems, nonlinear systems, time-varying systems, input-to-state stability, hybrid systems.
\end{keyword}

\end{frontmatter}

\section{Introduction}
\label{sec:introduction}


The theory of impulsive systems \citep{SP87, LBS89} is a convenient mathematical framework for modeling processes that combine continuous and discontinuous behaviors. An impulsive system consists of an ordinary differential equation that governs the evolution of the state between jumps, a static law which introduces discontinuities at some isolated moments of time, and an impulse-time sequence which determines the instants when the static law comes into play. Applications of impulsive systems can be found in robotics \citep{tang2015tracking}, biomedical engineering \citep{rivadeneira2015observability}, population dynamics~\citep{rogovchenko1997nonlinear, yang2019recent}, and many other areas. The basis of the mathematical theory of impulsive systems as well as fundamental results on the existence and local stability of solutions are summarized in the monographs by \citet{SP87, LBS89, SaP95}.

Many recently developed methods for the global stability analysis of impulsive systems are tightly related to the notion of input-to-state stability~(ISS). ISS was introduced by \citet{Son89} for continuous-time systems with inputs and characterizes the behavior of solutions with respect to external disturbances. ISS of impulsive control systems was firstly studied in \cite{HLT05, HLT08} by providing a set of Lyapunov-based sufficient conditions that ensure ISS with respect to suitable classes of impulse-time sequences. These results were obtained by introducing the concept of exponential ISS-Lyapunov function. Two constants, called rate coefficients, are used to bound the evolution of the ISS-Lyapunov function along the trajectories of impulsive system during flows (constant~$c\in\mathbb R$) and jumps (constant~$d\in\mathbb R$). 
Relations called dwell-time conditions~(DTC) which restrict the number/frequency of jumps in order to guarantee ISS have been introduced. These conditions can be of two types: fixed or average dwell-time. The fixed dwell-time conditions utilize the minimum/maximum time-distance between two consecutive jumps. The average ones impose a bound for the number of jumps in some average sense. Sufficient conditions for ISS which are based on fixed dwell-time conditions are less widely applicable than the ones based on average dwell-time conditions. Some generalizations of the recently discussed approach, involving exponential ISS-Lyapunov functions with multiple and time-varying rate coefficients, have been proposed in~\citet{DF17, DaF16} and \citet{peng2018unified, peng2018lyapunov, ning2018indefinite}, respectively. In addition, the ideas of \citet{HLT05, HLT08} have been extended to some classes of time-delay \citep{DKMN12, r1, r3, wu2016input}, switched \citep{r1,liu2012class}, and stochastic \citep{ren2017stability, yao2014input, wu2016input} impulsive systems.

A more refined technique for the ISS and global stability analysis of impulsive control systems that relies on the concept of a candidate Lyapunov function with nonlinear rate functions
has been employed in \citet{liu2012class, DM13, ECC18, feketa2019average, Automatica2019, mancilla2019uniform}. The nonlinear rate functions 
are used to bound the evolution of the candidate Lyapunov function along the nonlinear flows and jumps of the impulsive system more precisely than what is possible by means of an exponential-type Lyapunov function 
\citep{HLT05, HLT08}. Hence, the corresponding sufficient conditions are supposed to be less conservative. A drawback of most sufficient conditions involving candidate Lyapunov functions with nonlinear rates was that these conditions are valid over impulse-time sequences having fixed dwell times. This may reduce applicability and thus degrade the benefits of employing nonlinear rates. To the best of authors' knowledge, the recent papers \citet{feketa2019average, Automatica2019} are the only works that provide sufficient conditions for the (nonuniform) ISS and global asymptotic stability (GAS) of nonlinear impulsive systems in terms of candidate Lyapunov functions with nonlinear rates over impulse-time sequences having average-type dwell-time bounds. 

The majority of results on stability of impulsive systems consider a bound on the state that decays as time elapses but is insensitive to the occurrence of jumps. In a time-varying setting, this stability notion is not robust and too weak to be meaningful, as shown in \citet{haiman_auto19c_arxiv}. By contrast, a stronger stability concept where the bound on the state also decays when jumps occur, as usually considered for hybrid systems \citep{caitee_scl09}, indeed is robust and more meaningful for impulsive systems in a time-varying setting \citep{haimovich2019strong}.

Motivated by \citet{feketa2019average, Automatica2019}, the contribution of this paper arises from the combination and improvement of the benefits of average-type dwell-time bounds and the most widely applicable uniform ISS results of \citet{mancilla2019uniform}. First, we significantly broaden the class of impulse-time sequences over which our stability results hold by considering impulse-time sequences having eventually uniformly bounded impulse frequency (see Section~\ref{sec:imp-time-class}). We show that many known classes of impulse-time sequences are uniform subsets (see Definition~\ref{def:usubset}) of the newly proposed classes (Lemma~\ref{lem:relats}). Then, we provide sufficient conditions for both weak and strong ISS of impulsive systems where the ISS bounds hold uniformly over initial time and over classes of impulse-time sequences. These conditions are based on Lyapunov-type functions having nonlinear rates. For increased generality, we formulate our results in a (time-varying) two-measure framework that incorporates different important stability notions unifyingly. Our results are thus stronger, less conservative and more widely applicable than many existing results. 
The remainder of the paper is organized as follows. In Section~\ref{sec:basic-definitions}, we introduce notation, the type of systems and the stability definitions considered. In Section~\ref{sec:imp-time-class}, we introduce different classes of impulse-time sequences and show the relationships between these and existing classes. The main results of the paper, consisting in most general Lyapunov-based sufficient conditions for stability, uniformly with respect to both initial time and impulse-time sequences within the considered classes, are provided in Section~\ref{sec:iss-under-eubif}. Application examples are provided in Section~\ref{sec:examples} and conclusions in Section~\ref{sec:conclusions}.

\section{Basic definitions}
\label{sec:basic-definitions}

\subsection{Notation}
\label{sec:notation}

The reals, positive, and nonnegative reals are denoted by $\R$, $\R_{>0}$ and $\R_{\ge 0}$, respectively. A function $\varphi : \R_{\ge 0} \to \R_{\ge 0}$ is said to be of class $\C_*$, written $\varphi\in\C_*$, if $\varphi$ is continuous and satisfies $\varphi(0)=0$. If $\varphi\in\C_*$, we write $\varphi\in\P$ if $\varphi(s) > 0$ for all $s>0$; $\varphi\in\K$ if $\varphi$ is strictly increasing; and $\varphi\in\Ki$ if $\varphi\in\K$ and $\varphi$ is unbounded. It is clear that $\Ki \subset \K \subset \P \subset \C_*$. If $\beta : \R_{\ge 0} \times \R_{\ge 0} \to \R_{\ge 0}$ is continuous, we write $\beta\in\KL$ if $\beta(\cdot,t) \in \K$ for all $t\ge 0$, $\beta(r,\cdot)$ is strictly decreasing whenever positive, and $\lim_{t\to\infty} \beta(r,t) = 0$ for all $r\ge 0$. For $a\in\R$, $\lfloor a \rfloor$ and $\lceil a \rceil$ denote, respectively, the greatest integer not greater than $a$ and the least integer not less than $a$. For any function $x: I\subset\R \to \R^n$, with $I$ an open interval, $x(t^-)$ and $x(t^+)$ denote respectively the left and right limits of $x$ at $t\in I$. For an infinite sequence $\{\Delta_k\}_{k=1}^\infty$ of real numbers, $\Delta_k \nearrow L$ means that the sequence is strictly increasing and that $\lim_{k\to\infty} \Delta_k = L$.

 \subsection{The system}
 Consider the time-varying impulsive system with inputs defined by the equations
 \begin{subequations}
   \label{eq:is}
   \begin{align}
     \label{eq:is-ct}
     \dot{x}(t) &=f(t,x(t),u(t)),\phantom{x(t^-)+g^-}\quad\text{for } t\notin \gamma,    \displaybreak[0] \\
     \label{eq:is-st}
     x(t) &=x(t^-)+g(t,x(t^-),u(t)),\phantom{f} \quad\text{for } t\in \gamma,
   \end{align}
 \end{subequations}
where $t\ge 0$, the state variable $x(t)\in \R^n$, the input variable $u(t)\in \R^m$, $f$ (the flow map) and $g$ (the jump map) are functions from $\R_{\ge 0}\times \R^n \times \R^m$ to $\R^n$, and $\gamma=\{\tau_k\}_{k=1}^{N} \subset (0,\infty)$, with $N$ finite or $N=\infty$ is the impulse-time sequence. By ``input'', we mean a Lebesgue measurable and locally essentially bounded function $u:[0,\infty)\to \R^m$; we denote by $\U$ the set of all the inputs. We only consider impulse-time sequences $\gamma=\{\tau_k\}_{k=1}^N$ that are strictly increasing and have no finite limit points, i.e. $\lim_{k\to \infty}\tau_k=\infty$ when the sequence is infinite; we employ $\Gamma$ to denote the set of all such impulse-time sequences. For any sequence $\gamma = \{ \tau_{k} \}^N_{k=1} \in \Gamma$ we define for convenience $\tau_0=0$ and $\tau_{N+1}=\infty$ when $N$ is finite; nevertheless, $\tau_0$ is never an impulse time, because $\gamma \subset (0,\infty)$ by definition.

We assume that for each input $u\in \U$ the map $f_u(t,\xi):=f(t,\xi,u(t))$ is a Carath\'eodory function. Hence the (local) existence of solutions of the differential equation $\dot x(t)=f(t,x(t),u(t))$ is ensured \citep[see][Thm.~I.5.1]{hale_book80}.

A solution of (\ref{eq:is}) corresponding to initial time $t_0\ge 0$, initial state $x_0\in \R^n$, input $u \in \U$ and impulse-time sequence $\gamma$ is a  
function $x:[t_0,T_x)\to \R^n$ such that: 
\begin{enumerate}[a)]
\item $x(t_0)=x_0$; 
\item $x$ is locally absolutely continuous on each nonempty interval of the form $J_k=[\tau_k,\tau_{k+1}) \cap [t_0,T_x)$ and $\dot{x}(t)=f(t,x(t),u(t))$ for almost all $t\in J_k$; and \label{item:solflow}
\item for all $\tau_k \in (t_0,T_x)$, the left limit $x(\tau_k^-)$ exists and is finite, and it happens that \label{item:soljump}
$$x(\tau_k) = x(\tau_k^-)+g(\tau_k,x(\tau_k^-),u(\tau_k)).$$
\end{enumerate}
Note that \ref{item:solflow}) implies that for all $t\in [t_0,T_x)$, $x(t)=x(t^+)$, i.e. $x$ is right-continuous at $t$.
 
The solution $x$ is said to be maximally defined if no other solution $y:[t_0,T_y)\to \R^n$ satisfies $y(t) = x(t)$ for all $t\in [t_0,T_x)$ and has $T_y > T_x$. We use $\T(t_0,x_0,u,\gamma)$ to denote the set of maximally defined solutions of (\ref{eq:is}) corresponding to initial time $t_0$, initial state $x_0$, input $u$ and impulse-time sequence $\gamma$. Since solutions locally exist but are not necessarily unique, the set $\T(t_0,x_0,u,\gamma)$ is nonempty but may contain more than one solution.

Note that even if $t_0 \in \gamma$, any solution $x\in\T(t_0,x_0,u,\gamma)$ begins its evolution by ``flowing'' and not by ``jumping''. This is because in item~\ref{item:soljump}) above, the time instants where jumps occur are those in $\gamma \cap (t_0,T_x)$.

\subsection{Input-to-state stability (ISS)}
\label{sec:input-state-stab}

From~(\ref{eq:is-st}), it is clear that the input values at impulse instants may instantaneously affect the state trajectory. Suitable stability properties must hence take the latter fact into consideration. For a given input $u \in \U$ and impulse-time sequence $\gamma \in \Gamma$, we thus consider the following bound over an interval $I\subset \R_{\ge 0}$:
\begin{align}
  \| u_I \|_{\gamma} &:= \max\left\{ {\text{ess sup}}_{t\in I} |u(t)| , \sup_{t\in \gamma\cap I} |u(t)| \right\}. \label{eq:iss-norm}
\end{align}
This definition is in agreement with that employed in \citet{caitee_cdc05,caitee_scl09} in the context of hybrid systems.

For greater generality, we formulate the stability properties in the framework of two measures \citep[see][]{liu2012class,chalib_jco06}. Let $\mathcal{H}$ be the set of functions $h:\R_{\ge 0}\times \R^n\to \R_{\ge 0}$. For $\gamma \in \Gamma$ and $t>s\ge 0$, let $n^\gamma_{(s,t]}$ be the number of impulse-time instants contained in $(s,t]$, that is
\begin{align}
  n^\gamma_{(s,t]} := \# \Big[ \gamma \cap (s,t] \Big].
\end{align}
\begin{defin}
  \label{def:stab}
  Let $h^o, h\in \mathcal{H}$ and $\S\subset \Gamma$. We say that the impulsive system (\ref{eq:is}) is
  \begin{itemize}
  \item weakly $(h^o,h)$-ISS over $\S$ if there exist $\beta \in \KL$ and $\rho \in \ki$ such that for all $t_0\ge 0$, $x_0\in \R^n$, $u\in \U$, $\gamma \in \S$ and $x\in \T(t_0,x_0,u,\gamma)$, it happens that for all $t\in [t_0,T_x)$,
    \begin{align}
      \label{eq:cwiss}
      h(t,x(t)) &\le \beta \left (h^o(t_0,x_0),t-t_0\right )
                  +\rho(\|u_{(t_0,t]}\|_{\gamma});
    \end{align}
  \item strongly $(h^o,h)$-ISS over $\S$ if there exist $\beta \in \KL$ and $\rho \in \ki$ such that for all $t_0\ge 0$, $x_0\in \R^n$, $u\in \U$, $\gamma \in \S$ and $x\in \T(t_0,x_0,u,\gamma)$, it happens that for all $t\in [t_0,T_x)$,
    \begin{align}
      \label{eq:ciss}
      h(t,x(t)) \le \beta \left (h^o(t_0,x_0),\ t-t_0+n^\gamma_{(s,t]}\right )
      + \rho(\|u_{(t_0,t]}\|_{\gamma}).
    \end{align}
  \end{itemize}
\end{defin}
By suitable selection of $h^0$ and $h$, one can recover the definitions of different stability properties usually considered in the analysis of both impulsive and nonimpulsive systems. For example, with $h^0(t,x)=h(t,x)=|x|$, the weak $(h^o,h)$-ISS property becomes the standard ISS property considered in the literature of systems with inputs. By considering, in addition, that the set where the inputs take values is $\R^0:=\{0\}$, then the standard definition of global uniform asymptotic stability (GUAS) for systems without inputs is recovered. By taking $h^0(t,x)=|x|$ we obtain an extension of the input-to-output stability property (IOS) studied in \citet{sonwan_siamjco00}; see \citet{liu2012class} for more examples. 

The decaying term in a weak property is insensitive to jumps, whereas that of a strong property forces additional decay whenever a jump occurs. The weak ISS property is the one considered in most of the literature on impulsive systems with inputs, whereas strong ISS is in agreement with the ISS property for hybrid systems as in \citet{libnes_tac14}. The weak stability properties are, however, not robust in the context of time-varying systems \citep[see][]{haiman_auto19c_arxiv}.

\section{Classes of impulse-time sequences}
\label{sec:imp-time-class}

Our main interest is to provide stability results that hold uniformly over both initial time and broad classes of impulse-time sequences. To this aim, we consider several classes of sequences involving upper bounds (Section~\ref{sec:class-up-bound}) and lower bounds (Section~\ref{sec:class-low-bound}) on the number of impulses. In Section~\ref{sec:unif-rel}, we provide preliminary results establishing properties and relationships between the classes.

\subsection{Classes involving upper bounds}
\label{sec:class-up-bound}

\begin{defin}\label{def:UIB}
  A family of impulse-time sequences $\S \subset \Gamma$ is said to be uniformly incrementally bounded \citep[UIB, ][]{haiman_aadeca18} if there exists a continuous and nondecreasing function $\phi:\R_{>0} \to \R_{\ge 0}$ such that $n^\gamma_{(s,t]} \le \phi(t-s)$ for all $t>s\ge 0$, and all $\gamma\in\S$.
\end{defin}
Every sequence contained in some UIB family $\S$ is such that the number of impulses occuring in a period of fixed finite duration cannot become infinite as the initial time for such a period becomes increasingly large. A specific subfamily of the UIB class is that of average dwell-time sequences with chatter bound $\nn\in\N$ and average dwell time $\tau>0$, defined as
\begin{equation*}
\Sad(\nn, \tau) = \left\{\gamma\in\Gamma: n^\gamma_{(s,t]} \leq \nn + \frac{t-s}{\tau},\forall \;0\leq s\leq t\right\}.
\end{equation*}
It is clear that $\Sad(\nn,\tau)$ is UIB for every $\nn\in\N$ and $\tau>0$. Other classes are more easily described by referring to the impulse frequency (i.f.) $\frac{n^\gamma_{(s,s+t]}}{t}$ instead of the dwell time. Given $\rho\ge 0$ we consider the class of sequences having i.f. eventually uniformly upper bounded by $\rho$:
\begin{align}
  \Sup(\rho) &= \Bigg\{ \gamma\in\Gamma: \forall \varepsilon > 0, \exists T = T(\gamma,\varepsilon) > 0 \text{ s.t. } 
  \label{eq:Supu-def}
      \frac{n^\gamma_{(s,s+t]}}{t} \le \rho + \varepsilon, \forall t \ge T, \forall s\geq 0 \Bigg\}. 
\end{align}
We will later show (Lemma~\ref{lem:relats}) that the class $\Sup(\rho)$ is sufficiently broad for our purposes. Given a sequence $\gamma \in \Sup(\rho)$ it straightforwardly follows that $\limsup_{t\to \infty}\frac{n^{\gamma}_{(s,s+t]}}{t}=L_s\le \rho$ for all $s\ge 0$. Moreover, $L_s=L_0$ for all $s$. In fact, for $s\ge 0$ and $t>0$ we have that
 \begin{align*}
 \frac{n^\gamma_{(s,s+t]}}{t}=\frac{n^\gamma_{(0,s+t]}}{s+t}. \frac{s+t}{t}-\frac{n^\gamma_{(0,s]}}{t},
 \end{align*}
and then taking $\limsup$ as $t$ goes to $\infty$ it follows that 
\begin{align*}
 L_s=\limsup_{t\to \infty}\frac{n^\gamma_{(s,s+t]}}{t}=\limsup_{t\to \infty}\frac{n^\gamma_{(0,s+t]}}{s+t}=L_0.
\end{align*}
However, the condition $\limsup_{t\to \infty}\frac{n^{\gamma}_{(0,t]}}{t}\le \rho$ is not sufficient for ensuring that $\gamma \in \Sup(\rho)$, as Example~\ref{ex:ev-unif-ub} shows.
\begin{example}
  \label{ex:ev-unif-ub}
  Consider $\gamma \in \Gamma$ formed by the concatenation of an infinite number of finite sequences $\{\tau_k^1\}_{k=0}^2$, $\{\tau_k^2\}_{k=0}^2$, $\{\tau_k^3\}_{k=0}^5$, \ldots, $\{\tau_k^\ell\}_{k=0}^{p_\ell-1}$, \ldots. The finite sequences $\{\tau_k^\ell\}$ are defined as follows:
  \begin{align*}
    \tau_k^1 &:= 1 + \frac{k}{2} & k &= 0,1,2;\\
    \tau_k^\ell &:= 2^\ell - 1 + \frac{k}{p_\ell - 1} & k &= 0,\ldots,p_\ell -1,\quad \ell = 2, 3, \ldots;\\
    p_\ell &= 3\cdot 2^{\ell-2}.
  \end{align*}
  For each $\ell \ge 2$, the finite sequence $\{\tau_k^\ell\}_{k=0}^{p_\ell-1}$ is strictly increasing and contains $p_\ell = 3\cdotp 2^{\ell-2}$ equally spaced elements within an interval of length $1$; this follows by evaluating $\tau_0^\ell = 2^\ell - 1$, $\tau_{p_\ell - 1}^\ell = 2^\ell$ and $\tau_{k+1}^\ell - \tau_k^\ell = \frac{1}{p_\ell - 1} =: \Delta_\ell$. Note also that $3 = \tau_0^2 > \tau_2^1 = 2$ and that for $\ell \ge 2$, we have $\tau_0^{\ell+1} = 2^{\ell+1} - 1 > \tau_{p_\ell - 1}^\ell = 2^\ell$. Therefore, the concatenation $\gamma = \{ \{\tau_k^1\},\{\tau_k^2\},\ldots \}$ yields a strictly increasing sequence with no finite limit points and hence $\gamma \in \Gamma$. Let $t\ge 2$ and $\ell=\lceil \log_2(t) \rceil$, then
  \begin{align*}
    \frac{n^\gamma_{(0,t]}}{t}\le \frac{n^\gamma_{(0,2^\ell]}}{2^{\ell}-1} &= \frac{3\cdotp 2^{\ell-1}}{2^\ell-1} 
  \end{align*}
and therefore $\limsup_{t\to \infty}\frac{n^\gamma_{(0,t]}}{t}\le \frac{3}{2}$. 
Since given any finite length $T>0$ and arbitrarily large number $N$, we can always find $s\ge 0$ such that $n^\gamma_{(s,s+T]} \ge N$, it follows that $\gamma \notin \Sup(\rho)$ for any $\rho>0$.\mer
\end{example}

A sufficient condition for $\gamma\in \Sup(\rho)$ is that 
$$\limsup_{t\to \infty}\frac{n^\gamma_{(s,s+t]}}{t}=L$$
uniformly w.r.t. (u.w.r.t.) $s\ge 0$ with $L\le \rho$ (see Lemma \ref{lem:relats}). We thus define
\begin{align*}
  \Ssup(L) &:= \left\{ \gamma\in\Gamma: \limsup_{t\to \infty}\frac{n^{\gamma}_{(s,s+t]}}{t}=L,\;\text{u.w.r.t.}\;s\ge 0\right \}.
\end{align*}

\subsection{Classes involving lower bounds}
\label{sec:class-low-bound}

We can have some analogous definitions involving lower bounds, such as reverse average dwell time with reverse chatter bound $\nn\in\N$ and reverse average dwell time $\tau > 0$:
\begin{align*}
  \Srad&(\nn, \tau) = \left\{\gamma\in\Gamma: n^\gamma_{(s,t]} \geq -\nn + \frac{t-s}{\tau}, \forall\: 0\leq s\leq t\right\},
  \end{align*}
and the class of sequences having i.f. eventually uniformly lower bounded by $\rho$:
\begin{align}
  \Sdn&(\rho) = \Bigg\{ \gamma\in\Gamma: \forall \varepsilon > 0, \exists T = T(\gamma,\varepsilon)> 0 \text{ s.t. } 
  \label{eq:Sdnu-def}
      \frac{n^\gamma_{(s,s+t]}}{t} \ge \rho - \varepsilon, \forall t \ge T, \forall s\geq 0 \Bigg\}. 
\end{align}
In this case, a sufficient condition for $\gamma\in \Sdn(\rho)$ is that 
$$\liminf_{t\to \infty}\frac{n^\gamma_{(s,s+t]}}{t}=L$$
u.w.r.t. $s\ge 0$ with $L\ge \rho$ (see Lemma \ref{lem:relats}) and we define
\begin{align*}
  \check{\S}(L) &:= \left\{ \gamma\in\Gamma: \liminf_{t\to \infty}\frac{n^{\gamma}_{(s,s+t]}}{t}=L,\;\text{u.w.r.t.}\;s\ge 0\right \}.
\end{align*}

\subsection{Classes involving upper and lower bounds}
\label{sec:up-low-bnd-class}

For comparison of our results, we define the class of impulse-time sequences with fixed dwell times: 
\begin{align*}
  \Sf(\theta_1,\theta_2) = \Big\{ \gamma = \{\tau_k\}_{k=1}^\infty
  \in\Gamma: \theta_1\leq \tau_{k+1}-\tau_k\leq\theta_2, \ \forall
  k\in\N\Big\}.
\end{align*}
and that of sequences having i.f. with a uniform limit $\rho$:
\begin{align}
  \Slim^u(\rho) = \Bigg\{ \gamma\in\Gamma: \lim_{t\to \infty}\frac{n^\gamma_{(s,s+t]}}{t}=\rho\;\text{u.w.r.t.}\;s\ge 0\Bigg\}.
\end{align}
It is clear that $\Slim^u(\rho) \subset \Sup(\rho) \cap \Sdn(\rho)$ and hence is a much smaller class than either $\Sup(\rho)$ or $\Sdn(\rho)$.

\subsection{Uniformity and relationships}
\label{sec:unif-rel}

The word ``uniformly'' in the definitions of
$\Sup(\rho)$ and $\Sdn(\rho)$ refers to uniformity with respect to initial time. In the definition of UIB, however, ``uniformly'' refers to uniformity with respect to both initial time and every impulse-time sequence in the set. We are also interested in sets of impulse-time sequences where the bounds imposed by $\Sup(\rho)$ or $\Sdn(\rho)$ hold uniformly over every sequence in the set. We therefore employ the following definition.
\begin{defin}
  \label{def:usubset}
  We say that a set of impulse-time sequences $\S \subset \Gamma$ is a uniform subset of $\Sup(\rho)$ (resp. $\Sdn(\rho)$), and write $\S\usubset \Sup(\rho)$ (resp. $\S\usubset \Sdn(\rho)$), if $\S \subset \Sup(\rho)$ (resp. $\S \subset \Sdn(\rho)$) and $T$ in (\ref{eq:Supu-def}) (resp. (\ref{eq:Sdnu-def})) can be selected independently of $\gamma\in\S$.
\end{defin}
\begin{remark}
  Every $\S \subset \Sup(\rho)$ ($\S \subset \Sdn(\rho)$) containing a finite number of sequences satisfies $\S \usubset \Sup(\rho)$ ($\S \usubset \Sdn(\rho)$) because the required $T$ can be taken as
  $\max_{\gamma\in\S} T(\gamma,\varepsilon)$. \mer
\end{remark}

In Lemma~\ref{lem:relats}, we prove that the classes $\Sup(\rho)$ and 
$\Sdn(\rho)$ are very broad, i.e., many known classes of impulse-time sequences are (uniform) subsets of the classes with eventually uniformly upper/lower bounded i.f. We require the following definition.
\begin{defin}
  \label{def:delta-pert}
  Given $\Delta\ge 0$, we say that $\gamma^*=\{\tau_k^*\}_{k=1}^N\in \Gamma$ ($N\in\N$ or $N = \infty$) is a $\Delta$-perturbation of $\gamma=\{\tau_k\}_{k=1}^N\in \Gamma$ if $\tau_k^*\in [\tau_k-\Delta,\tau_k+\Delta]$ for all $k$. 
\end{defin}
\begin{lem}
  \label{lem:relats}
  The following relationships hold.
  \begin{enumerate}[i)]
  \item $\Sad(\nn,\tau)\usubset \Sup\left(\frac{1}{\tau}\right)$ and $\Srad(\nn,\tau)\usubset \Sdn\left(\frac{1}{\tau}\right)$.
    \label{item:rel-euup-sad}
  \item If $\S\usubset \Sup(\rho)$ ($\S\usubset \Sdn(\rho)$), then for each $0<\tau<\frac{1}{\rho}$ ($\tau>\frac{1}{\rho}$) there exists $\nn$ such that
    $\S\subset \Sad(\nn,\tau)$ ($\S\subset \Srad(\nn,\tau)$).\label{item:rel-uib-sad}
  \item $\Ssup(L) \subset \Sup(\rho)$ ($\check{\S}(L)\subset\Sdn(\rho)$) for all $L\le \rho$ ($L\ge \rho$). \label{item:limADT-euup}
  \item $\Sf(\theta_1,\theta_2)\usubset \Sup\left(\frac{1}{\theta_1}\right)$ and $\Sf(\theta_1,\theta_2)\usubset \Sdn\left(\frac{1}{\theta_2}\right)$.
    \label{item:fixed-euup}
  \item $\Sup(\rho)$ and $\Sdn(\rho)$ are persistent under perturbations:\label{item:pers-pert}
    \begin{enumerate}[a)]
      \makeatletter
      \renewcommand{\p@enumii}{\theenumi)}
      \makeatother
    \item if $\gamma^*$ is a $\Delta$-perturbation of $\gamma\in \Sup(\rho)$ ($\gamma\in\Sdn(\rho)$), then $\gamma^*\in \Sup(\rho)$ ($\gamma^*\in \Sdn(\rho)$);\label{item:Dperta}
    \item if $\S\usubset \Sup(\rho)$ ($\S\usubset\Sdn(\rho)$), then $\S_{\Delta}$, the set of all the $\Delta$-perturbations of sequences in $\S$, satisfies $\S_{\Delta} \usubset \Sup(\rho)$ ($\S_{\Delta} \usubset \Sdn(\rho)$).\label{item:Dpert}
    \end{enumerate}
  \end{enumerate}
\end{lem}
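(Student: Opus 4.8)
The plan is to establish the six items in sequence, since several later items reuse computations from earlier ones. For item~\ref{item:rel-euup-sad}, I would take $\gamma\in\Sad(\nn,\tau)$, so $n^\gamma_{(s,s+t]}\le \nn + t/\tau$, giving $n^\gamma_{(s,s+t]}/t \le 1/\tau + \nn/t$; choosing $T(\varepsilon) = \nn/\varepsilon$, which is independent of $\gamma$, yields the bound $1/\tau + \varepsilon$ for all $t\ge T$ and all $s\ge 0$, hence $\Sad(\nn,\tau)\usubset\Sup(1/\tau)$. The argument for $\Srad$ and $\Sdn$ is symmetric, using the reverse inequality. For item~\ref{item:rel-uib-sad}, suppose $\S\usubset\Sup(\rho)$ and fix $0<\tau<1/\rho$; set $\varepsilon = 1/\tau - \rho > 0$ and let $T$ be the ($\gamma$-independent) threshold from Definition~\ref{def:usubset}. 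For $t\ge T$ we then have $n^\gamma_{(s,s+t]}\le (\rho+\varepsilon)t = t/\tau$, while for $0\le t < T$ the quantity $n^\gamma_{(s,s+t]}$ is bounded — but here one must be a little careful, because a priori $n^\gamma_{(s,s+T]}$ could be large for some sequence in $\S$. The fix is to use the bound on the interval $(s,s+2T]$: by item~\ref{item:rel-uib-sad}'s own threshold applied at length $2T\ge T$, $n^\gamma_{(s,s+2T]}\le 2T/\tau$, so $n^\gamma_{(s,s+t]}\le 2T/\tau$ for all $t\le 2T$ and all $s$; taking $\nn = \lceil 2T/\tau\rceil$ gives $n^\gamma_{(s,s+t]}\le \nn + t/\tau$ for all $t\ge 0$, i.e.\ $\S\subset\Sad(\nn,\tau)$. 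The reverse case for $\Sdn$/$\Srad$ with $\tau>1/\rho$ is analogous, noting that $n^\gamma_{(s,t]}\ge -\nn + (t-s)/\tau$ is trivially satisfied on short intervals for any $\nn\ge 0$.

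For item~\ref{item:limADT-euup}, if $\gamma\in\Ssup(L)$ with $L\le\rho$, then $\limsup_{t\to\infty} n^\gamma_{(s,s+t]}/t = L$ uniformly in $s$, which by definition of uniform limit superior means: for every $\varepsilon>0$ there is $T$ (independent of $s$) with $n^\gamma_{(s,s+t]}/t \le L+\varepsilon \le \rho+\varepsilon$ for all $t\ge T$ and all $s\ge 0$; this is precisely membership in $\Sup(\rho)$. (Note this item is stated as a plain inclusion, not a uniform one, so no $\gamma$-uniformity is needed.) Item~\ref{item:fixed-euup}: for $\gamma=\{\tau_k\}\in\Sf(\theta_1,\theta_2)$, between times $s$ and $s+t$ the number of impulses is at most $t/\theta_1 + 1$ and at least $t/\theta_2 - 1$ — the $\pm 1$ accounting for a possible partial inter-impulse interval at each end. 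Hence $n^\gamma_{(s,s+t]}/t \le 1/\theta_1 + 1/t$ and $\ge 1/\theta_2 - 1/t$, and choosing $T(\varepsilon) = 1/\varepsilon$, independent of $\gamma$, gives both uniform-subset claims.

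The main obstacle, and the part requiring the most care, is item~\ref{item:pers-pert} on persistence under $\Delta$-perturbations. The key observation is that if $\gamma^* = \{\tau_k^*\}$ is a $\Delta$-perturbation of $\gamma=\{\tau_k\}$, then every impulse of $\gamma^*$ in $(s,s+t]$ corresponds to an impulse of $\gamma$ in $(s-\Delta, s+t+\Delta]$, so $n^{\gamma^*}_{(s,s+t]} \le n^\gamma_{(s-\Delta', s+t+\Delta]}$ where $s-\Delta'\eqdef \max\{0, s-\Delta\}$; this interval has length at most $t + 2\Delta$. Therefore $n^{\gamma^*}_{(s,s+t]}/t \le n^\gamma_{(s',s'+t')}/t' \cdot (t'/t)$ with $t' \le t+2\Delta$ and $t'/t \to 1$ as $t\to\infty$. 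Given $\varepsilon>0$, pick $\varepsilon'>0$ small and $T'$ from the $\Sup(\rho)$ membership of $\gamma$ (for item~\ref{item:Dperta}) or from the uniform threshold for $\S$ (for item~\ref{item:Dpert}); then for $t$ large enough — large enough that $t\ge T'$, that $t+2\Delta$ still forces use of the threshold, and that $(\rho+\varepsilon')(t+2\Delta)/t \le \rho+\varepsilon$ — we obtain $n^{\gamma^*}_{(s,s+t]}/t \le \rho+\varepsilon$ uniformly in $s$. Crucially, in item~\ref{item:Dpert} the resulting threshold depends only on $\rho$, $\Delta$, $\varepsilon$ and the uniform threshold for $\S$, not on the individual perturbed sequence, which is exactly what $\S_\Delta\usubset\Sup(\rho)$ requires. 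The lower-bound ($\Sdn$) versions use the symmetric inequality $n^{\gamma^*}_{(s,s+t]} \ge n^\gamma_{(s+\Delta, s+t-\Delta]}$ (valid once $t>2\Delta$), an interval of length $t-2\Delta$, and the estimate $(\rho-\varepsilon')(t-2\Delta)/t \ge \rho-\varepsilon$ for $t$ large.
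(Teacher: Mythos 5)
Your proposal is correct. For items~\ref{item:rel-euup-sad})--\ref{item:limADT-euup}) it follows essentially the paper's own argument: the same choice $T(\varepsilon)=\nn/\varepsilon$ in~\ref{item:rel-euup-sad}), the same split into $t\ge T$ and $t<T$ in~\ref{item:rel-uib-sad}) (the paper bounds $n^\gamma_{(s,s+T]}\le(\rho+\varepsilon)T$ directly at the threshold length $T$, so your detour through $2T$ is unnecessary though harmless), and the same reading of the uniform limit superior in~\ref{item:limADT-euup}). One small slip in~\ref{item:rel-uib-sad}): your parenthetical for the $\Srad$ case claims the short-interval inequality holds for ``any $\nn\ge 0$'', but on an impulse-free interval of length $\ell<T$ the requirement $n^\gamma_{(s,s+\ell]}\ge-\nn+\ell/\tau$ forces $\nn\ge\ell/\tau$, so you still need $\nn\ge\lceil T/\tau\rceil$; the conclusion is unaffected. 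For~\ref{item:fixed-euup}) you count impulses directly ($n^\gamma_{(s,s+t]}\le t/\theta_1+1$ and $\ge t/\theta_2-1$) where the paper simply notes $\Sf(\theta_1,\theta_2)\subset\Sad(1,\theta_1)\cap\Srad(1,\theta_2)$ and invokes item~\ref{item:rel-euup-sad}); these are equivalent. The genuinely different step is item~\ref{item:pers-pert}): the paper bounds the discrepancy $n^{\gamma^*}_{(s,s+t]}-n^{\gamma}_{(s,s+t]}$ on the \emph{same} interval by $2\sup_{r\ge 0} n^\gamma_{(r,r+\Delta]}$ and then controls that supremum by evaluating the eventual-frequency bound at an auxiliary scale $t^*>\max\{\Delta,T\}$, whereas you compare $n^{\gamma^*}$ on $(s,s+t]$ with $n^\gamma$ on the dilated interval $(s-\Delta,s+t+\Delta]$ (respectively the shrunk interval for the $\Sdn$ case) and absorb the ratio $(t\pm 2\Delta)/t\to 1$ into $\varepsilon$. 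Both arguments rest on the same fact---the index-wise correspondence $\tau_k\leftrightarrow\tau_k^*$ moves each impulse by at most $\Delta$---but your route is more direct, avoids the auxiliary scale and the supremum over $\Delta$-windows, and still yields a new threshold depending only on $\rho$, $\Delta$, $\varepsilon$ and the uniform threshold of $\S$, which is exactly what part~(b) of item~\ref{item:pers-pert}) requires.
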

\begin{proof}
\ref{item:rel-euup-sad}) Consider any $\gamma\in\Sad(\nn,\tau)$. Then, $$n^\gamma_{(s,t]} \leq \nn + \frac{t-s}{\tau}\quad \text{for all }0\leq s < t.$$
Therefore,
$$\frac{n^\gamma_{(s,s+t]}}{t} \leq \frac{\nn}{t} + \frac{1}{\tau}\quad \forall s \geq 0, \forall t> 0.$$
Given $\varepsilon>0$, let $T = T(\varepsilon) := \frac{\nn}{\varepsilon}$. Then, $\frac{n^\gamma_{(s,s+t]}}{t} \leq \frac{1}{\tau} + \varepsilon$ holds for all $s \geq 0$ and $t\geq T$. Hence $\gamma\in \Sup\left(\frac{1}{\tau}\right)$. Since $T$ does not depend on the specific $\gamma$ considered, we conclude that $\Sad(\nn,\tau)\usubset \Sup\left(\frac{1}{\tau}\right)$. The inclusion $\Srad(\nn,\tau)\usubset \Sdn\left(\frac{1}{\tau}\right)$ can be proved analogously.

\ref{item:rel-uib-sad}) 
By assumption, for every $\varepsilon>0$ there exists $T=T(\varepsilon)>0$ such that $\frac{n^\gamma_{(s,s+t]}}{t} \le \rho + \varepsilon$ holds for all $t \ge T$, $s\geq 0$ and $\gamma\in \S$. For $0\le t<T$, it holds that
$$
n^\gamma_{(s,s+t]}\leq n^\gamma_{(s,s+T]}\leq (\rho+\varepsilon)T \quad \forall s\geq 0, \forall\gamma\in\S.
$$
Combining the bounds for small and large $t$ yields
\begin{align}
  \label{eq:ntotbnd}
  n^\gamma_{(s,s+t]}\leq (\rho+\varepsilon)T + (\rho+\varepsilon)t\quad \forall s,t\ge 0, \forall \gamma\in\S.
\end{align}
Let $0<\tau<\frac{1}{\rho}$, set $\varepsilon=1/\tau-\rho > 0$ and take the corresponding $T=T(\varepsilon)$. Substituting these quantities into (\ref{eq:ntotbnd}), it follows that $n^\gamma_{(s,s+t]} \le T/\tau + t/\tau$. Taking $\nn = \lceil T/ \tau \rceil$, then $\S \subset \Sad(\nn,\tau)$. The other case can be proved analogously.

\ref{item:limADT-euup}) Let $\gamma \in \Ssup(L)$ with $L\le \rho$ and $\varepsilon > 0$. By definition of $\Ssup(L)$, 
there exists $T=T(\gamma,\varepsilon)>0$ such that 
$$ L-\varepsilon \le \sup_{\tau \ge t} \frac{n^{\gamma}_{(s,s+\tau]}}{\tau}\le L+\varepsilon\quad \forall s\ge 0, \forall t\ge T.$$
Then $\frac{n^{\gamma}_{(s,s+t]}}{t}\le \rho+\varepsilon$ for all $t\ge T$ and hence $\gamma \in \Sup(\rho)$. The other inclusion can be proved analogously.

\ref{item:fixed-euup}) This follows from item \ref{item:rel-euup-sad}) and the facts that $\Sf(\theta_1,\theta_2)\subset \Sad(1,\theta_1)$ and  $\Sf(\theta_1,\theta_2)\subset \Srad(1,\theta_2)$. 

\ref{item:Dperta}) Pick any $\gamma\in\Sup(\rho)$ and let $\gamma^*$ be a $\Delta$-perturbation of $\gamma$. Let $\varepsilon_1 > 0$ and define $\varepsilon = \varepsilon_1/2$. Since $\gamma\in\Sup(\rho)$ then there exists $T=T(\gamma, \varepsilon)>0$ such that
\begin{equation}\label{eq:lemmaVi}
\frac{n^\gamma_{(s,s+t]}}{t} \le \rho + \varepsilon \quad \forall t \ge T, \forall s\geq 0.
\end{equation}
Let us pick some $t^*>\max\{\Delta, T\}$. Then,
$$\frac{n^\gamma_{(s,s+t^*]}}{t^*} = \frac{n^\gamma_{(s,s+\Delta]}+n^\gamma_{(s+\Delta,s+t^*]}}{t^*}\le \rho + \varepsilon \ \forall s\geq 0$$
and, hence,
$$n^\gamma_{(s,s+\Delta]} \le (\rho + \varepsilon)t^*-n^\gamma_{(s+\Delta,s+t^*]} \le (\rho + \varepsilon)t^* \ \forall s\geq 0.$$
For any $s,t\geq 0$ denote by $n^+_{(s,s+t]}=n^{\gamma^*}_{(s,s+t]}-n^{\gamma}_{(s,s+t]}$. By definition of $\Delta$-perturbation, it follows that
$$ |n^+_{(s,s+t]}| \le 2\sup_{r\ge 0} n^\gamma_{(r,r+\Delta]} \quad \forall s,t\ge 0. $$
 Then, adding $\frac{n^+_{(s,s+t]}}{t}$ to the left and right-hand sides of \eqref{eq:lemmaVi}, we get
\begin{align*}
\frac{n^{\gamma^*}_{(s,s+t]}}{t} &\le \rho + \varepsilon + \frac{n^+_{(s,s+t]}}{t} \le \rho + \varepsilon + \frac{2\sup\limits_{r\geq 0}n^\gamma_{(r,r+\Delta]}}{t} \le \rho + \varepsilon + \frac{2(\rho+\varepsilon)t^*}{t}\quad \forall t \ge T, \forall s\geq 0.
\end{align*}
Let $T_1=\frac{2(\rho+\varepsilon)t^*}{\varepsilon_1-\varepsilon} = \frac{4(\rho+\varepsilon)t^*}{\varepsilon_1} > 0$. Then, $\frac{n^{\gamma^*}_{(s,s+t]}}{t} \le \rho + \varepsilon_1$ for all $t \ge T_1$ and $s\geq 0$; hence, $\gamma^* \in \Sup(\rho)$. 

\ref{item:Dpert}) This follows from the proof of \ref{item:Dperta}) by noting that if $\gamma\in \S\usubset \Sup(\rho)$, then $T$ and hence $t^*$ can be chosen independently of $\gamma$. This causes $T_1$ to become independent of $\gamma^* \in \S_\Delta(\rho)$. As a consequence, $\S_{\Delta} \usubset \Sup(\rho)$. 

The implications involving $\Sdn(\rho)$ can be proved analogously.
\qed
\end{proof}
By means of Lemma~\ref{lem:relats}, we may see that $\Sup(\rho)$ and $\Sdn(\rho)$ incorporate impulse-time sequences with
\begin{itemize}
\item (reverse) average dwell-time \citep{HLT05,HLT08}, Lemma~\ref{lem:relats}\ref{item:rel-euup-sad}-\ref{item:rel-uib-sad});
\item i.f.~eventually uniformly (upper/lower) convergent
\citep{DF17, feketa2019average}, Lemma~\ref{lem:relats}\ref{item:limADT-euup}); 
\item fixed dwell-time \citep{SP87, DM13}, Lemma~\ref{lem:relats}\ref{item:fixed-euup});
\item non-fixed impulse-time moments within predefined time-windows \citep{Tan2015, feng2017linear, Automatica2019}, Lemma~\ref{lem:relats}\ref{item:pers-pert}).
\end{itemize}

In order to gain more insight into the breadth of the classes $\Sup(\rho)$ and $\Sdn(\rho)$, and the facts given by Lemma~\ref{lem:relats}\ref{item:rel-euup-sad}-\ref{item:rel-uib-sad}), we remark that for every $\tau>0$, the sets $\Sup \left (\frac{1}{\tau}\right )$ and $\Sdn \left (\frac{1}{\tau}\right )$ contain sequences which do not belong to, respectively, $\Sad(\nn,\tau)$ and $\Srad(\nn,\tau)$ for any $\nn \in \N$. Example~\ref{ex:Sup-gen} illustrates this fact for the case $\tau = 1$. Examples for arbitrary values of $\tau$ can be obtained by simple modifications.
\begin{example}
  \label{ex:Sup-gen}
Consider $\gamma=\{\tau_k\}_{k=1}^{\infty}$ defined by
\begin{align*}
  \tau_1 = 1\quad\text{and}\quad \tau_k = k - \sum_{\ell=2}^k \frac{1}{\ell}\quad \text{for }k\ge 2,
\end{align*}
so that $\Delta_k := \tau_k-\tau_{k-1}=1-1/k$ for all $k\ge 2$. Let $\varepsilon>0$ and define $\tau':=\frac{1}{1+\varepsilon/2}$. Note that $0<\tau'<1$. Since $\Delta_k\nearrow 1$, then there exists $k_0\ge 2$ so that $1>\Delta_k\ge \tau'$ for all $k\ge k_0$. Set $n_0 :=n^{\gamma}_{(0,k_0]}$ and $T = T(\varepsilon) := 2\frac{n_0+1}{\varepsilon}$. For every $s\ge k_0$, we have that $\tau_k\ge s$ implies that $k\ge k_0$.
In consequence, for all $s\ge k_0$ and $t> 0$
\begin{align*}
 n^{\gamma}_{(s,s+t]}\le 1+\frac{t}{\tau'},
\quad\text{and then}\quad
 \frac{n^{\gamma}_{(s,s+t]}}{t}\le \frac{1}{t}+\frac{1}{\tau'}.
\end{align*}
For $0\le s<k_0$ and $t> 0$,
\begin{align*}
 \frac{n^{\gamma}_{(s,s+t]}}{t} \le \frac{n^{\gamma}_{(0,k_0+t]}}{t} \le \frac{n_0+n^{\gamma}_{(k_0,k_0+t]}}{t}\le \frac{n_0+1}{t}+\frac{1}{\tau'}.
\end{align*}
Therefore, for all $s\ge 0$ and $t > 0$ it follows that 
\begin{align*}
 \frac{n^{\gamma}_{(s,s+t]}}{t}\le \frac{n_0+1}{t}+\frac{1}{\tau'}.
\end{align*}
Recalling the definitions of $\tau'$ and $T$, we arrive to
\begin{align*}
 \frac{n^{\gamma}_{(s,s+t]}}{t}\le 1+\varepsilon\quad \forall s\ge 0,t\ge T.
\end{align*}
We have thus shown that $\gamma \in \Sup(1)$. Next, we prove that for every $\nn\in \N$, $\gamma \notin \Sad(\nn,1)$. Suppose for a contradiction that  $\gamma \in \Sad(\nn^*,1)$ for some $\nn^*\in\N$. By definition of $\Sad$ then
\begin{align}
  \label{eq:sad1}
 n^{\gamma}_{(0,t]}\le \nn^*+t\quad \text{for all }t>0.
\end{align}
Let $k\in \N$ with $k \ge 2$. Since $\tau_k=k-\sum_{\ell=2}^k\frac{1}{\ell}$ then 
\begin{align}
  \label{eq:sad2}
 n^{\gamma}_{(0,k]}= k+n^{\gamma}_{(\tau_k,k]}\ge k+\left \lfloor\sum_{\ell=2}^k\frac{1}{\ell} \right \rfloor.
\end{align}
Consider $k$ sufficiently large so that $\sum_{\ell=2}^k\frac{1}{\ell} > \nn^* + 1$ and let $t = k$. Then, (\ref{eq:sad1}) gives $n^\gamma_{(0,k]} \le k + \nn^*$ but (\ref{eq:sad2}) gives $n^\gamma_{(0,k]} \ge k + \nn^* + 1$. This is clearly a contradiction. \mer  
\end{example}
Arguments similar to those used in Example~\ref{ex:Sup-gen} show that the sequence $\gamma=\{\tau_k\}_{k=1}^{\infty}$ defined recursively by $\tau_1=1$ and $\tau_{k}=k + \sum_{\ell=2}^k (1/\ell)$ for $k\ge 2$ satisfies $\gamma \in \Sdn(1)$ and $\gamma \notin \Srad(\nn,1)$ for any $\nn\in \N$. 

The following example illustrates a class $\S \subset \Gamma$ that is a subset of $\Sup(1)$ but not a uniform subset.
\begin{example}
  \label{ex:S-not-unif}
  For each $\nn\in\N$ with $\nn \ge 2$, consider a sequence $\gamma_\nn \in \Gamma$ constructed by concatenating an infinite number of finite sequences $\{\tau_{\nn,k}^\ell\}_{k=0}^{\nn-1}$, for $\ell = 1, 2, \ldots,$ whose elements are defined as follows
  \begin{align*}
    \tau_{\nn,k}^\ell = 1 + (\ell-1)\nn + \frac{k}{\ell-1+\nn}.
  \end{align*}
  For $0\le k \le \nn-2$, we have
  \begin{align*}
    \tau_{\nn,k+1}^\ell - \tau_{\nn,k}^\ell = \frac{1}{\ell-1+\nn}
  \end{align*}
  so that the spacing between consecutive elements in the finite sequence $\{\tau_{\nn,k}^\ell\}_{k=0}^{\nn-1}$ becomes smaller and smaller as $\ell \to \infty$ because $\lim_{\ell\to\infty} \tau_{\nn,k+1}^\ell - \tau_{\nn,k}^\ell =0$. It can be shown that $\gamma_\nn \in \Sad(\nn,1)$. From Lemma~\ref{lem:relats}\ref{item:rel-euup-sad}), then $\gamma_\nn \in \Sup(1)$ for every $\nn$. Therefore, the class $\S := \{\gamma_\nn : \nn\in\N, \nn \ge 2\}$ satisfies $\S \subset \Sup(1)$. However, $\S$ is not a uniform subset of $\Sup(1)$ because, from the proof of Lemma~\ref{lem:relats}\ref{item:rel-euup-sad}) then the required $T$ in (\ref{eq:Supu-def}) cannot be taken independent of $\nn$ and hence cannot be independent of $\gamma_\nn$. \mer
\end{example}

\section{ISS under eventually uniformly bounded i.f.}
\label{sec:iss-under-eubif}

In this section, we provide sufficient conditions for the weak and strong $(h^o,h)$-ISS over classes of impulse-time sequences having eventually uniformly bounded impulse frequency. Our main results are stated in Section~\ref{sec:main-res}. In Section~\ref{sec:proof-tech}, we explain our proof technique, based on the analysis of a comparison system. The remaining technical results required for the full proof are given in Section~\ref{sec:rem-proofs}.

\subsection{Main results}
\label{sec:main-res}
We say that a locally Lipschitz function $V:\R_{\ge 0}\times \R^n\to \R$ is a $(h^o,h)$-ISS Lyapunov function candidate for system (\ref{eq:is}) if 
\begin{enumerate}[a)]
  \item there exist $\phi_1,\phi_2\in \Ki$ so that for all $t\ge 0$ and $\xi \in \R^n$,\label{item:Vphi12}
    \begin{align} 
      \label{eq:bound1}
      \phi_1(h(t,\xi))\le V(t,\xi)\le \phi_2(h^o(t,\xi));
    \end{align}
  \item \label{item:Vimply} there exist $\chi, \pi \in \Ki$, a locally integrable function $p:\R_{\ge 0}\to \R_{\ge 0}$, a continuous function $\varphi:\R_{\ge 0}\to \R$ and $\psi \in \P$ such that for all $t\ge 0$, $\xi\in \R^n$ and $\mu \in \R^m$, 
  \begin{enumerate}[i)] 
   \item \label{item:lyapbound} $D^{+}_{f}V(t,\xi,\mu) \le -p(t)\varphi(V(t,\xi))$ if $V(t,\xi)\ge \chi(|\mu|)$;
   \item \label{item:lyapbound2}
      $V(t,\xi+g(t,\xi,\mu)) \le \psi(V(t,\xi) )$ if $V(t,\xi)\ge \chi(|\mu|)$;
      \item  $V(t,\xi+g(t,\xi,\mu)) \le \pi(|\mu|)$ if $V(t,\xi)\le \chi(|\mu|)$.
  \end{enumerate}
\end{enumerate}
Here, $D^{+}_{f}V(t,\xi,\mu)$ denotes the upper-right Dini derivative of $V$ along $f$ at $(t,\xi,\mu)\in \R_{\ge 0}\times \R^n\times \R^m$, i.e.
\begin{align*}
 D^{+}_{f}V(t,\xi,\mu):=\limsup_{h \to 0^+}\frac{V(t+h,\xi+h f(t,\xi,\mu))-V(t,\xi)}{h}.
 \end{align*}
\begin{teo}
  \label{thm:main} Let $V$ be a $(h^o,h)$-ISS Lyapunov function candidate for system (\ref{eq:is}) with $\psi,\varphi \in \P$. Suppose that there exists $\theta>0$ such that
  \begin{align} 
    \label{eq:int}
    M:=\sup_{a>0}\int_a^{\psi(a)}\frac{ds}{\varphi(s)}< \inf_{t\ge 0}\int_{t}^{t+\theta}p(s)ds=:N
  \end{align}
  Then, the following hold.
  \begin{enumerate}[a)]
  \item If $M>0$, then (\ref{eq:is}) is strongly $(h^o,h)$-ISS over any $\S \subset \Gamma$ such that $\S\usubset \Sup\left(\frac{1}{\theta} \right)$.
  \item If $M= 0$, then (\ref{eq:is}) is weakly $(h^o,h)$-ISS over $\S=\Gamma$ and strongly $(h^o,h)$-ISS over any UIB family $\S\subset \Gamma$.
  \item If $M< 0<N$, then (\ref{eq:is}) is strongly $(h^o,h)$-ISS over $\S=\Gamma$.
  \end{enumerate}
\end{teo}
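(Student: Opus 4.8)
The plan is to reduce the ISS analysis of the nonlinear impulsive system to a scalar comparison analysis, then carry out the scalar analysis using the integral condition \eqref{eq:int}. First I would introduce the scalar comparison variable $v(t) := V(t,x(t))$ along any solution $x \in \T(t_0,x_0,u,\gamma)$. Using conditions \ref{item:lyapbound})--iii) in the definition of the $(h^o,h)$-ISS Lyapunov function candidate, I would establish, by a standard ISS-Lyapunov splitting argument, that whenever $v(t) \ge \chi(\|u_{(t_0,t]}\|_\gamma)$ the variable $v$ obeys the differential inequality $\dot v(t) \le -p(t)\varphi(v(t))$ on flow intervals and the jump inequality $v(\tau_k) \le \psi(v(\tau_k^-))$ at impulse times, while whenever $v$ drops below the threshold it stays controlled by a $\ki$ function of the input. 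This separates the ``gain'' part (the $\rho(\|u\|)$ term in \eqref{eq:cwiss}/\eqref{eq:ciss}) from the ``convergence'' part, which is what the rest of the proof must handle. The bulk of the work is then to show that the ISS-threshold-free comparison system $\dot v = -p(t)\varphi(v)$ with jumps $v^+ \le \psi(v^-)$ admits a $\KL$ estimate of the required weak or strong flavor, uniformly over the relevant class of impulse-time sequences.

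For the comparison analysis I would exploit the quantity $W(a,b) := \int_a^b \frac{ds}{\varphi(s)}$, which is a change of coordinates turning the flow into pure integration: along a flow interval $[\sigma,t]$ one gets $W(v(t), v(\sigma)) \ge \int_\sigma^t p(s)\,ds$ (equivalently $v$ decreases by at least the ``accumulated $p$-mass''), and each jump increases the $W$-coordinate by at most $M = \sup_{a>0}\int_a^{\psi(a)} \frac{ds}{\varphi(s)}$. Over a window of length $\ge \theta$ the $p$-mass is at least $N$ by \eqref{eq:int}, so over such a window the net change of the $W$-coordinate is at most $M\cdot(\text{number of jumps in the window}) - N\cdot(\text{number of disjoint }\theta\text{-subwindows})$. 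When $\gamma \in \Sup(1/\theta)$, for large elapsed time the number of jumps in a window of length $t$ is at most roughly $(1/\theta + \varepsilon)t$, so with $M < N$ (case a, after choosing $\varepsilon$ small using $M/N < 1 = (1/\theta)\theta$) the $W$-coordinate has a net per-unit-time decay; uniformity of the threshold $T$ over $\S \usubset \Sup(1/\theta)$ is exactly what makes the resulting $\KL$ bound uniform over $\S$, and since decay is charged both to elapsed time and to the jump count we obtain the \emph{strong} estimate \eqref{eq:ciss}. Case b with $M = 0$ is easier: jumps never increase the $W$-coordinate, so flow decay alone gives a weak estimate over all of $\Gamma$, and if additionally $\S$ is UIB the number of jumps is controlled by $\phi(t-s)$, letting one convert elapsed-time decay into the strong form. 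Case c with $M < 0 < N$ is the most favorable: each jump strictly decreases the $W$-coordinate and flows decrease it too, so one gets a strong estimate over all of $\Gamma$ with no frequency restriction.

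The translation from a $\KL$ bound on $v = V(t,x(t))$ back to a $(h^o,h)$-ISS bound on $h(t,x(t))$ is then routine: the sandwich $\phi_1(h(t,\xi)) \le V(t,\xi) \le \phi_2(h^o(t,\xi))$ from \eqref{eq:bound1} lets one compose with $\phi_1^{-1}$ and $\phi_2$, and the input-dependent terms assemble into a single $\rho \in \ki$ via standard manipulations (using that a maximum, or sum, of $\ki$ functions is bounded by a $\ki$ function). I would also need the usual remark that $T_x$ can only be finite if $h$ (hence $V$) blows up, which the derived bound precludes once one checks the estimate holds on $[t_0,T_x)$; this gives completeness of solutions as a byproduct but is not needed for the stated conclusion.

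The main obstacle I anticipate is the bookkeeping in case a: getting the $\KL$ decay rate to be genuinely \emph{uniform} over $\S \usubset \Sup(1/\theta)$ requires carefully tiling an arbitrary interval $[t_0,t]$ into $\theta$-length windows, handling the leftover partial window and the initial transient before the uniform threshold $T$ kicks in, and tracking the (bounded) number of ``extra'' jumps accumulated during that transient — all while ensuring the bound depends on $\gamma$ only through quantities controlled uniformly on $\S$. The subtlety that $\Sup(1/\theta)$ only controls windows of length $\ge T$, not all windows, means the estimate degrades on short windows by an additive constant, which must be absorbed into the $\KL$ function's initial value rather than its decay rate. This is exactly the kind of argument the paper defers to Sections~\ref{sec:proof-tech} and~\ref{sec:rem-proofs}, so I would isolate the scalar comparison lemma as a standalone statement and prove Theorem~\ref{thm:main} from it.
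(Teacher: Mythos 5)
Your proposal follows essentially the same route as the paper: reduce to the scalar comparison system $\dot v \le -p(t)\varphi(v)$, $v^+ \le \psi(v^-)$, run the bookkeeping in the coordinate $\int^{\cdot} ds/\varphi(s)$ over $\theta$-length windows using $M<N$ and the uniform threshold $T$ from $\S \usubset \Sup(1/\theta)$, and translate back through the sandwich \eqref{eq:bound1} (the paper packages the translation as Theorem~3.1 of \citet{mancilla2019uniform} and the window argument as Proposition~\ref{prop:main}, with the weak-to-strong upgrade over UIB families isolated in Lemma~\ref{lem:equiv}). The sketch is sound and the anticipated difficulties are exactly the ones the paper's proof addresses.
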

Theorem~\ref{thm:main} gives sufficient conditions for $(h^o,h)$-ISS when the continuous part of the dynamics cannot be destabilizing for large values of the state. The case $M<0$ corresponds to stabilizing impulses and hence the ensuing stability is very strong. The most interesting case is $M>0$, corresponding to destabilizing impulses. Inversely to Theorem~\ref{thm:main}, Theorem~\ref{thm:main2} addresses the case when the continuous part cannot be stabilizing.
\begin{teo} \label{thm:main2}
  Let $V$ be a $(h^o,h)$-ISS Lyapunov function candidate for system (\ref{eq:is}) with $\psi,-\varphi\in\P$. Let $\S\subset\Gamma$ satisfy $\S\usubset \Sdn\left(\frac{1}{\theta} \right)$ with $\theta > 0$ such that 
  \begin{align}  \label{eq:integralr2}
    \inf_{a>0} \int_{\psi(a)}^a\frac{ds}{- \varphi(s)} > \sup_{t\ge 0}\int_{t}^{t+\theta}p(s)\:ds.
  \end{align}
  Suppose that, in addition, 
  \begin{align} \label{eq:nofet}
    \int_1^\infty \frac{ds}{-\varphi(s)} = \infty. 
  \end{align}
  Then (\ref{eq:is}) is strongly $(h^o,h)$-ISS over $\S$.
\end{teo}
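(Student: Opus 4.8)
The plan is to argue by a scalar comparison along the Lyapunov function $W(t):=V(t,x(t))$, in the spirit of the proof of Theorem~\ref{thm:main} but with the roles of flows and jumps interchanged. The relevant change of variables is $\eta(a):=\int_1^a\frac{ds}{-\varphi(s)}$, $a>0$. First I would record two facts about $\eta$. Since $\psi\in\P$, we have $\psi(a)\to0$ as $a\to0^+$, so a \emph{finite} value of $\int_0^1\frac{ds}{-\varphi(s)}$ would force $\int_{\psi(a)}^a\frac{ds}{-\varphi(s)}=\eta(a)-\eta(\psi(a))\to0$, contradicting \eqref{eq:integralr2}; hence $\eta(0^+)=-\infty$. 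Condition \eqref{eq:nofet} gives $\eta(+\infty)=+\infty$. Thus $\eta$ is a strictly increasing continuous bijection of $(0,\infty)$ onto $\R$, extended by $\eta^{-1}(-\infty):=0$. Put $\underline M:=\inf_{a>0}\int_{\psi(a)}^a\frac{ds}{-\varphi(s)}$ and $\overline N:=\sup_{t\ge0}\int_t^{t+\theta}p(s)\,ds$, so that $0\le\overline N<\underline M<\infty$ by \eqref{eq:integralr2}.

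Fix $t_0,x_0,u$, a sequence $\gamma\in\S$, a solution $x\in\T(t_0,x_0,u,\gamma)$, a time $t\in[t_0,T_x)$, and write $v:=\|u_{(t_0,t]}\|_\gamma$. The heart of the argument consists of two one-step estimates valid on any time subinterval along which $W\ge\chi(v)$ (there $W\ge\chi(|u|)$, so \ref{item:lyapbound}) and \ref{item:lyapbound2}) are in force, and $W$ is bounded away from $0$ since if $W$ ever reached $0$ it would stay there by $-\varphi(0)=0$, $V\ge0$ and \ref{item:lyapbound2}), making the bound trivial): during flows, \ref{item:lyapbound}) and the standard a.e.\ inequality $\frac{d}{ds}V(s,x(s))\le D^+_fV(s,x(s),u(s))$ give, after dividing by $-\varphi(W(s))>0$, $\frac{d}{ds}\eta(W(s))\le p(s)$ for a.e.\ $s$; at each impulse $\tau$ in the subinterval, \ref{item:lyapbound2}) and monotonicity of $\eta$ give $\eta(W(\tau))\le\eta(\psi(W(\tau^-)))\le\eta(W(\tau^-))-\underline M$. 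Telescoping over $[t_0,t]$, on the event that $W\ge\chi(v)$ holds throughout, yields $\eta(W(t))\le\eta(W(t_0))+\int_{t_0}^t p-n^\gamma_{(t_0,t]}\underline M$. Now bound $\int_{t_0}^t p\le(\tfrac{t-t_0}{\theta}+1)\overline N$ and invoke $\S\usubset\Sdn(1/\theta)$: choose $\varepsilon>0$ small enough that $(\tfrac1\theta-\varepsilon)\underline M>\overline N/\theta$, let $T'=T(\varepsilon)$ be the corresponding time (uniform over $\S$) so that $n^\gamma_{(t_0,t]}\ge(\tfrac1\theta-\varepsilon)(t-t_0)$ for $t-t_0\ge T'$, and split $n^\gamma_{(t_0,t]}\underline M$ into one part kept proportional to $n^\gamma_{(t_0,t]}$ and one part bounded below through the density bound. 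This produces constants $c_1>0$ and $C^*>0$ depending only on $\theta,\underline M,\overline N,\varepsilon,T'$ (hence independent of the data and of $\gamma\in\S$) such that
\[
  \eta(W(t))\ \le\ \eta(W(t_0))+C^*-c_1\bigl(t-t_0+n^\gamma_{(t_0,t]}\bigr)\qquad\text{whenever }W\ge\chi(v)\text{ on }[t_0,t],
\]
the short-horizon case $t-t_0<T'$ being absorbed into $C^*$ by discarding the jump term.

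Next I would control what happens once $W$ leaves the ``large'' region. Put $c_0(v):=\max\{\chi(v),\pi(v)\}$ and $\hat c(v):=\eta^{-1}\bigl(\eta(c_0(v))+C^*\bigr)$. If $W(s)>c_0(v)$ for all $s\in[t_0,t]$, the estimate above applies. Otherwise, letting $s^*$ be the supremum of the times in $[t_0,t]$ at which $W\le c_0(v)$, one uses \ref{item:lyapbound2}), the remaining jump inequality in \ref{item:Vimply}) and right-continuity of $W$ to see that $W(s^*)\le c_0(v)$ and $W>c_0(v)$ on $(s^*,t]$; telescoping from $s^*$, together with $\int_{s^*}^t p-n^\gamma_{(s^*,t]}\underline M\le C^*$ (the counting estimate when $t-s^*\ge T'$, a crude bound otherwise), gives $W(t)\le\hat c(v)$. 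Combining the two cases with $W(t_0)\le\phi_2(h^o(t_0,x_0))$ from \eqref{eq:bound1}, applying $\phi_1^{-1}$ and $\phi_1^{-1}(\max\{a,b\})\le\phi_1^{-1}(a)+\phi_1^{-1}(b)$, I obtain
\[
  h(t,x(t))\ \le\ \beta\bigl(h^o(t_0,x_0),\,t-t_0+n^\gamma_{(t_0,t]}\bigr)+\rho\bigl(\|u_{(t_0,t]}\|_\gamma\bigr),
\]
where $\beta(r,s):=\phi_1^{-1}\bigl(\eta^{-1}(\eta(\phi_2(r))+C^*-c_1 s)\bigr)$ and $\rho:=\phi_1^{-1}\circ\hat c$. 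Because $\eta(0^+)=-\infty$, $\eta(+\infty)=+\infty$ and $\phi_1,\phi_2\in\Ki$, one checks that $\beta\in\KL$ and $\rho\in\Ki$; this is exactly \eqref{eq:ciss}, so (\ref{eq:is}) is strongly $(h^o,h)$-ISS over $\S$. (The case $v=0$ is consistent with $\hat c(0)=0$ and with the vanishing dichotomy for $W$ noted above.)

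The step I expect to be the main obstacle is the region bookkeeping just described: showing that once $W$ first attains the level $c_0(v)$ it can never again exceed $\hat c(v)$. This is precisely where the uniform lower bound on the impulse frequency over $\S$ is essential — it guarantees that over long sub-horizons the accumulated flow growth is outweighed by the contraction coming from the jumps, so that $\int p-n^\gamma\underline M$ stays bounded above uniformly, which is what pins the post-exit behaviour of $W$ below $\hat c(v)$. A secondary but genuinely necessary point is the observation, made at the very start, that \eqref{eq:integralr2} already forces $\eta(0^+)=-\infty$: it is this that lets $\eta^{-1}$ map $-\infty$ to $0$ and hence makes $\beta$ a true $\KL$ function, i.e.\ gives asymptotic rather than merely practical (ultimate-bound) stability.
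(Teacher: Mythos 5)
Your argument is correct, but it takes a genuinely different route from the paper's. The paper proves the theorem in two modular steps: it first establishes (Proposition~\ref{prop:main2}) that the scalar comparison system \eqref{eq:isc} is strongly GUAS uniformly over $\S$ --- after replacing $\psi$ by a class-$\Ki$ majorant $\psi^*$ still satisfying \eqref{eq:integralr2} (Lemma~6.2 of the cited reference), reducing to the extremal system with equalities (Remark~3 there), and organizing the counting around blocks of $k_0$ consecutive impulses --- and then invokes Theorem~3.1 of \citet{mancilla2019uniform} to convert strong GUAS of the comparison system into strong $(h^o,h)$-ISS of \eqref{eq:is}. You instead work directly with $W(t)=V(t,x(t))$, telescoping the transformed quantity $\eta(W)$ along flows and jumps and handling the input by the last-exit-time argument from the level $c_0(v)=\max\{\chi(v),\pi(v)\}$; in effect you re-prove inline the content of that external Theorem~3.1. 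What your route buys is self-containedness and the avoidance of the $\Ki$-majorization of $\psi$ (the inequality $\eta(W(\tau))\le\eta(W(\tau^-))-\underline{M}$ needs no monotonicity of $\psi$); what the paper's route buys is reuse of the comparison machinery across both Theorems~\ref{thm:main} and~\ref{thm:main2} and a cleaner uniqueness/ordering structure for the scalar ODE. Your opening observation that \eqref{eq:integralr2} forces $\int_0^1 ds/(-\varphi(s))=\infty$ is exactly the fact the paper imports from the proof of Theorem~5.2 of \citet{mancilla2019uniform} to make $F$ a bijection onto $\R$. Two spots deserve tightening but are not gaps: the claim that $W$ stays at $0$ once it vanishes follows from Osgood-type forward uniqueness of the zero solution of $\dot w=-p(t)\varphi(w)$ (which needs the divergence of $\int_0^1 ds/(-\varphi(s))$ you already established, not merely $-\varphi(0)=0$); and in the short-horizon case $t-t_0<T'$ you should retain $-n^\gamma_{(t_0,t]}\underline{M}\le -c_1 n^\gamma_{(t_0,t]}$ rather than discard the jump term outright, so that the decay in $t-t_0+n^\gamma_{(t_0,t]}$ survives.
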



Theorems~\ref{thm:main} and \ref{thm:main2} provide a manifold extension of the results for the stability analysis of impulsive systems available in the literature. More precisely, these theorems extend the results in \citet{mancilla2019uniform,Automatica2019,feketa2019average} to provide two-measure uniform ISS results for time-varying nonlinear impulsive systems over sequences having eventually uniformly bounded impulse frequency. 

The results in \citet[Sections~4, 5]{mancilla2019uniform}, in turn, provide sufficient conditions for two-measure ISS that constituted a substantial extension of previously available results. The reader may refer to \citet{mancilla2019uniform} for specific explanations on how existing results were extended by the latter. In particular, Theorem~4.4 in \citet{mancilla2019uniform} provides sufficient conditions for the $(h^o,h)$-ISS of a system of the form~(\ref{eq:is}) based on a Lyapunov function candidate of the type considered here but where ISS holds uniformly only over sequences with fixed dwell times. The extension from fixed dwell-time sequences to sequences having eventually uniformly bounded i.f.~is substantial, as shown by Lemma~\ref{lem:relats}. In addition, excepting for the class of impulse-time sequences, the sufficient conditions of Theorems~\ref{thm:main} and~\ref{thm:main2} coincide with those of Theorem~4.4 in \citet{mancilla2019uniform}, and hence in this case the current results strengthen the corresponding conclusions. 

The results in \citet{Automatica2019} provide nonuniform GAS results over sequences having nonfixed impulse times over predefined time windows and those in \citet{feketa2019average} provide nonuniform ISS results over sequences having a uniform limit, i.e. sequences $\gamma \in \Slim^u(L)$ for some $L>0$. The extension with respect to the latter results regards uniformity with respect to initial time and the (broader) class of impulse-time sequences considered, two-measure ISS, and the consideration of Lyapunov function candidates with possibly time-varying rates.


\subsection{Proof technique}
\label{sec:proof-tech}

\hyphenation{diff-e-rence}
If $V$ is a Lyapunov function candidate for system~(\ref{eq:is}), we consider the following one-dimensional differential~/~difference inclusion system, which we henceforth call comparison system:
\begin{subequations}
  \label{eq:isc}
  \begin{align}
    \label{eq:is-ctc}
    \dot{z}(t) &\in(-\infty,-p(t)\varphi(z(t))], & 
    t\notin \gamma,  \\  
    \label{eq:is-stc}
    z(t) &\in[0,\psi(z(t^-))], & 
    t\in \gamma.
  \end{align}
\end{subequations}
We say that a function $z:I_z \to \R_{\ge 0}$, with $I_z=[t_0,T_z)$ is a solution of (\ref{eq:isc}) corresponding to $\gamma =\{\tau_k\}\in \Gamma$, initial time $t_0 \ge 0$ and initial condition $z_0\ge 0$ if i) $z(t_0) = z_0$, ii) for every nonempty interval $J_k = [\tau_k,\tau_{k+1}) \cap I_z$, $z$ is locally absolutely continuous on $J_k$ and $\dot{z}(t)\le -p(t)\varphi(z(t))$ for almost all $t\in J_k$, and iii) for every $\tau_k\in \gamma \cap (t_0,T_z)$, it happens that $z(\tau_k^-)$ exists and $0\le z(\tau_k) \le \psi(z(\tau_k^-))$. A solution $z$ of (\ref{eq:isc}) is maximally defined if it does not have a proper extension; it is forward complete if $T_z=\infty$. We will employ $\C(t_0,z_0,\gamma)$ to denote the set of maximally defined solutions $z$ of \eqref{eq:isc} corresponding to $\gamma \in \Gamma$, initial time $t_0$ and initial condition $z_0$. Note that, by definition, every solution of the comparison system is nonnegative.

We say that the comparison system (\ref{eq:isc}) is weakly or strongly GUAS (uniformly) over $\S \subset \Gamma$ if there exists a function $\beta \in \KL$ such that every $z\in\C(t_0,z_0,\gamma)$ with $\gamma \in \S$, $t_0\ge 0$ and $z_0\ge 0$ satisfies, respectively
\begin{align} 
  \label{eq:wbeta}
  \text{(weak)}\quad z(t) &\le \beta\left (z_0,t-t_0\right) &\forall t\in I_z,\\
  \label{eq:sbeta}
  \text{(strong)}\quad z(t) &\le \beta\left (z_0,t-t_0+n^\gamma_{(t_0,t]} \right) &\forall t\in I_z.
\end{align}
Section~3 of \citet{mancilla2019uniform} gives the basis method for ensuring that system (\ref{eq:is}) is weakly or strongly $(h^o,h)$-ISS over $\S$ by analyzing the comparison system (\ref{eq:isc}) given by means of an $(h^o,h)$-ISS Lyapunov function candidate. Specifically, Theorem~3.1 in \citet{mancilla2019uniform} states that system~(\ref{eq:is}) is weakly or strongly $(h^o,h)$-ISS over some arbitrary class $\S$ provided the corresponding comparison system is, respectively, weakly or strongly GUAS uniformly over $\S$. In the current paper, our main proof technique consists in establishing the required type of GUAS for the comparison system and applying Theorem~3.1 in \citet{mancilla2019uniform} to ensure the required type of $(h^o,h)$-ISS. Propositions~\ref{prop:main} and~\ref{prop:main2} below provide the required GUAS results and hence constitute our main technical contribution. Their proofs are given in Section~\ref{sec:rem-proofs}.
\begin{prop}
  \label{prop:main} Let $\psi,\varphi \in \P$ and $p:\R_{\ge 0}\to\R_{\ge 0}$ be locally integrable. Let $\theta>0$ be such that (\ref{eq:int}) holds. Then, the following hold.
  \begin{enumerate}[a)]
   \item If $M>0$, then the comparison system (\ref{eq:isc}) is strongly GUAS uniformly over any $\S\usubset \Sup\left(\frac{1}{\theta}\right)$.\label{item:Mgt0}
   \item If $M= 0$, then the comparison system (\ref{eq:isc}) is weakly GUAS uniformly over $\S=\Gamma$ and strongly GUAS uniformly over any UIB family $\S\subset \Gamma$.\label{item:Meq0}
    \item If $M< 0 < N$, then the comparison system (\ref{eq:isc}) is strongly GUAS uniformly over $\S=\Gamma$.\label{item:Mlt0}
  \end{enumerate}
\end{prop}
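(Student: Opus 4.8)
The plan is to analyze the comparison system~\eqref{eq:isc} directly, exploiting the key structural fact that along any solution the flow part is non-increasing (since $\dot z \le -p(t)\varphi(z) \le 0$ whenever $z \ge 0$ and $\varphi \in \P$), while a jump multiplies the "level" by at most the map $a \mapsto \psi(a)$. First I would introduce the function $\Phi(a) := \int_{a_0}^{a} \frac{ds}{\varphi(s)}$ for a fixed reference $a_0 > 0$; this is a strictly increasing bijection between a subinterval of $\R_{>0}$ and its image (possibly an unbounded interval), and it linearizes the flow: if $z$ solves $\dot z \le -p(t)\varphi(z)$ on an interval $[\sigma,\tau)$ with $z(t)>0$ throughout, then $\frac{d}{dt}\Phi(z(t)) \le -p(t)$, so $\Phi(z(\tau^-)) \le \Phi(z(\sigma)) - \int_\sigma^\tau p(s)\,ds$. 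The quantity $M = \sup_{a>0} \int_a^{\psi(a)} \frac{ds}{\varphi(s)} = \sup_{a>0}\bigl(\Phi(\psi(a)) - \Phi(a)\bigr)$ is exactly the worst-case increase of $\Phi$ across a single jump, and $N = \inf_t \int_t^{t+\theta} p(s)\,ds$ is a lower bound on the decrease of $\Phi$ along flow over any window of length $\theta$. Thus, heuristically, over any time window of length $t$ containing $k$ jumps, $\Phi$ increases by at most $kM$ and decreases by at least (roughly) $\frac{t}{\theta}N$ once $t$ is large; if $\S \usubset \Sup(1/\theta)$, then for any $\varepsilon>0$ eventually $k \le (1/\theta + \varepsilon)t$, and choosing $\varepsilon$ small enough that $M(1/\theta+\varepsilon) < N/\theta$ yields a net strict decrease of $\Phi$ at a uniform linear rate in $t$. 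This is the mechanism behind part~\ref{item:Mgt0}).

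Carrying this out rigorously requires several steps. Step 1: establish a one-step / one-window estimate — for any solution $z$ and any $t_0$, bound $\Phi(z(t))$ (or handle the case $z$ hits $0$, after which it stays $0$ by uniqueness of the trivial solution of the majorizing ODE from below, so that branch is trivial) in terms of $\Phi(z(t_0))$, the elapsed time $t - t_0$, and $n^\gamma_{(t_0,t]}$. Step 2: use the $\Sup(1/\theta)$ property, with $T$ uniform over $\S$, to convert the jump count into a bound linear in elapsed time with slope strictly less than $N/(M\theta)$ (when $M>0$); this produces, for $t - t_0 \ge T$, an estimate $\Phi(z(t)) \le \Phi(z(t_0)) - c(t-t_0) + b$ for constants $c>0$, $b$ depending only on $\S$, $\theta$, $M$, $N$. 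Step 3: translate back through $\Phi^{-1}$ and handle the transient regime $t - t_0 < T$ (where only finitely many jumps can occur — bounded by $\phi(T)$-type arguments if UIB, or more carefully here, by the already-available bound $n^\gamma_{(s,s+T]} \le (\rho+\varepsilon)T$ from the proof of Lemma~\ref{lem:relats}\ref{item:rel-uib-sad})) to absorb everything into a single $\KL$ bound of the strong form~\eqref{eq:sbeta}. The construction of the explicit $\beta \in \KL$ from the exponential-type (in the $\Phi$-coordinate) decay is standard and I would only sketch it. For part~\ref{item:Meq0}) with $M = 0$: here a single jump never increases $\Phi$, so $\Phi(z(t))$ is non-increasing along the entire solution (flows and jumps alike), giving immediately a weak bound $z(t) \le \Phi^{-1}(\Phi(z(t_0)))$ that is insensitive to jumps — but to get a $\KL$ decay one must still extract decay from the flow between jumps; over a UIB family the number of jumps in any window is uniformly bounded, so in any window of length $\theta$ there is a subinterval of length $\ge \theta/(1+\phi(\theta))$ free of jumps, on which $\Phi$ drops by a definite amount, yielding the strong bound; without UIB one only controls elapsed time, giving the weak bound over all of $\Gamma$. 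For part~\ref{item:Mlt0}) with $M < 0 < N$: jumps strictly decrease $\Phi$ by at least $|M|$ and flow over each $\theta$-window decreases it by at least $N$, so $\Phi(z(t))$ decreases at a uniform rate regardless of the sequence — the strongest conclusion, over all of $\Gamma$.

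The main obstacle I anticipate is Step 1 — making the linearization via $\Phi$ fully rigorous when $\varphi$ is merely in $\P$ (not locally Lipschitz, possibly not integrable near $0$ or growing so that $\int^\infty ds/\varphi(s) < \infty$, and $p$ only locally integrable). One must be careful that $\Phi$ may map onto a bounded interval, so $\Phi^{-1}$ is defined only on that interval and the comparison "solution decreases $\Phi$ by $\int p$" could formally drive the argument outside the domain; the resolution is that decreasing $\Phi$ only makes $z$ smaller, so this is harmless, but the bookkeeping near $z = 0$ (where $\varphi(z)\to 0$ and $\Phi$ may diverge to $-\infty$) needs the separate, easy observation that once a solution reaches a small level it is trapped there and contributes nothing. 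A secondary delicate point is the interplay between the "eventually" in the definition of $\Sup(1/\theta)$ (the threshold $T$) and the requirement of a genuine $\KL$ bound valid for all $t \ge t_0$, not just large $t - t_0$; this is handled by the transient estimate in Step 3, but it is where the uniformity of $T$ over $\S$ (the $\usubset$ hypothesis) is essential and must be invoked explicitly.
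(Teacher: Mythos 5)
Your strategy for part~a) --- bookkeeping in the coordinate $\Phi(a)=\int_{a_0}^{a}\frac{ds}{\varphi(s)}$, where each jump costs at most $M$, each $\theta$-window of flow gains at least $N$, and the uniform threshold $T$ from $\S\usubset\Sup(1/\theta)$ with $\varepsilon$ small enough makes the budget net-negative --- is exactly the paper's argument; the paper writes the same computation as $\int_{z(t_0+T_0)}^{z(t_0)}\frac{ds}{\varphi(s)}\ge m_0N-k_0M>0$ over windows of length $T_0=m_0\theta$. Your part~c) is a legitimate alternative to the paper's proof, which instead compares separately with the discrete system $w_{k+1}=\bar\psi(w_k)$ and with the unperturbed ODE and merges the resulting $\KL$ bounds via $\max\{a,b\}\ge(a+b)/2$; your additive $\Phi$-budget reaches the strong bound more directly, at the price of the same inversion issues discussed below.

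Two points are genuinely problematic. First, in part~b) your mechanism for the strong bound over a UIB family --- locating a jump-free subinterval of length at least $\theta/(1+\phi(\theta))$ in each $\theta$-window and claiming $\Phi$ drops there by a definite amount --- fails for general locally integrable $p$: the hypothesis $N=\inf_{t\ge 0}\int_t^{t+\theta}p(s)\,ds>0$ controls the integral of $p$ only over windows of length at least $\theta$, and $\int p$ over a short subinterval can be zero (take $p$ supported on a sparse union of intervals). The step is also unnecessary: when $M=0$ one has $\psi(r)\le r$, so jumps never increase $z$, the flow decay accumulates across jumps, and $z(t)\le w_{t_0,z_0}(t)$ for the unperturbed ODE solution, which yields the weak $\KL$ bound over all of $\Gamma$ directly; the strong bound over any UIB family then follows from the general equivalence of weak and strong GUAS over UIB families (Lemma~\ref{lem:equiv}, using $n^\gamma_{(t_0,t]}\le\phi(t-t_0)$ to rescale the $\KL$ argument). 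Second, you resolve the bounded-range problem for $\Phi$ only at the lower end ($z\to 0$), but the upper end also bites: Proposition~\ref{prop:main} does not assume $\int_1^\infty \frac{ds}{\varphi(s)}=\infty$, so $\Phi$ may be bounded above and the transient overshoot inside a window cannot be expressed as $\Phi^{-1}(\Phi(z_0)+k_0M)$, which may be undefined. The GUS estimate must instead come from the raw composition bound $z(t)\le\bar\psi^{\,k}(z_0)$ with $\bar\psi\in\Ki$, $\bar\psi\ge\max\{\psi,\id\}$ and $k\le k_0$ the number of jumps so far in the window; this is the $\eta=\bar\psi^{\,k_0}$ device in the paper, and your Step~3 needs it made explicit.
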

\begin{prop} \label{prop:main2}
  Let $\psi,-\varphi \in \P$, $p:\R_{\ge 0}\to \R_{\ge 0}$ be locally integrable, and $\theta>0$ be such that (\ref{eq:integralr2})--(\ref{eq:nofet}) are satisfied. Then, (\ref{eq:isc}) is strongly GUAS uniformly over $\S$ for every $\S\subset\Gamma$ such that $\S\usubset \Sdn^{u}\left (\frac{1}{\theta}\right )$.
\end{prop}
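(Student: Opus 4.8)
The plan is to study an arbitrary solution $z\in\C(t_0,z_0,\gamma)$ with $\gamma\in\S$ through the change of variable $w(t):=W(z(t))$, where $W(a):=\int_1^a \frac{ds}{-\varphi(s)}$ for $a>0$. Since $-\varphi\in\P$, $W$ is continuous and strictly increasing on $(0,\infty)$. Two facts are recorded at the outset. First, (\ref{eq:nofet}) gives $W(a)\to+\infty$ as $a\to\infty$, which also rules out finite escape times along flows. Second, (\ref{eq:integralr2}) forces $W(a)\to-\infty$ as $a\to0^+$: otherwise $\int_{\psi(a)}^a\frac{ds}{-\varphi(s)}\le\int_0^a\frac{ds}{-\varphi(s)}\to0$ as $a\to0^+$, contradicting that the infimum in (\ref{eq:integralr2}) exceeds $\sup_{t\ge0}\int_t^{t+\theta}p\ge0$. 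Hence $W$ is an increasing bijection of $(0,\infty)$ onto $\R$, with increasing inverse $W^{-1}:\R\to(0,\infty)$. I also set $P_\theta:=\sup_{t\ge0}\int_t^{t+\theta}p(s)\,ds<\infty$ and fix a finite $\mu_0$ with $P_\theta<\mu_0\le\inf_{a>0}\int_{\psi(a)}^a\frac{ds}{-\varphi(s)}$, so that $W(\psi(a))\le W(a)-\mu_0$ for all $a>0$. Forward completeness of maximal solutions follows from (\ref{eq:nofet}) (no finite escape during flow, isolated impulse times), so $I_z=[t_0,\infty)$.

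First I would establish, for $z_0>0$, the core estimate
\begin{equation*}
  W(z(t))\le W(z_0)+\int_{t_0}^t p(s)\,ds-\mu_0\,n^\gamma_{(t_0,t]},\qquad t\ge t_0.
\end{equation*}
Indeed, while $z>0$ one has $\frac{d}{dt}W(z(t))=\frac{\dot z(t)}{-\varphi(z(t))}\le p(t)$ a.e.\ on flow intervals, and at each impulse $z(\tau_k)\le\psi(z(\tau_k^-))$ yields $W(z(\tau_k))\le W(\psi(z(\tau_k^-)))\le W(z(\tau_k^-))-\mu_0$; concatenating these gives the estimate. Because $W(a)\to-\infty$ as $a\to0^+$, a solution cannot leave the value $0$ during flow, so once $z$ reaches $0$ it stays there and the estimate is then kept trivially; in particular $z_0=0$ forces $z\equiv0$ (also using $\psi(0)=0$), in which case the proposition holds with any $\beta\in\KL$. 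Applying the increasing $W^{-1}$ yields $z(t)\le W^{-1}\!\big(W(z_0)+\int_{t_0}^t p-\mu_0 n^\gamma_{(t_0,t]}\big)$.

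Next I would convert this into a strong $\KL$ bound. Splitting $[t_0,t]$ into $\lceil(t-t_0)/\theta\rceil$ intervals of length at most $\theta$ gives $\int_{t_0}^t p\le(\frac{t-t_0}{\theta}+1)P_\theta$. Put $\alpha:=\frac12(\mu_0-P_\theta)>0$ and choose $\varepsilon\in(0,\frac1\theta)$ small enough that $\lambda:=\frac\alpha\theta-(\mu_0-\alpha)\varepsilon>0$. Since $\S\usubset\Sdn(\frac1\theta)$, there is $T=T(\varepsilon)>0$, independent of $\gamma\in\S$, with $n^\gamma_{(t_0,t]}\ge(\frac1\theta-\varepsilon)(t-t_0)$ whenever $t-t_0\ge T$. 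Writing $r:=t-t_0$, $k:=n^\gamma_{(t_0,t]}$ and $\mu_0 k=\alpha k+(\mu_0-\alpha)k$, a direct computation gives $W(z(t))\le W(z_0)+P_\theta-\alpha k-\lambda r$ for $r\ge T$, while for $0\le r<T$ one has $W(z(t))\le W(z_0)+(\frac T\theta+1)P_\theta-\mu_0 k$; in both cases $W(z(t))\le W(z_0)+C-\nu(r+k)$ with $\nu:=\min\{\alpha,\lambda\}>0$ and $C:=(\frac T\theta+1)P_\theta+\nu T$. Hence $z(t)\le\beta\big(z_0,\,t-t_0+n^\gamma_{(t_0,t]}\big)$ for all $t\ge t_0$, where $\beta(\rho,s):=W^{-1}(W(\rho)+C-\nu s)$ for $\rho>0$ and $\beta(0,s):=0$. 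Using that $W(\rho)\to-\infty$ as $\rho\to0^+$ and $W^{-1}(\xi)\to0$ as $\xi\to-\infty$, one checks that $\beta$ is continuous, $\beta(\cdot,s)\in\K$ for each $s$, and $\beta(\rho,\cdot)$ is positive, strictly decreasing and vanishes at infinity, so $\beta\in\KL$. As $C$ and $\nu$ depend only on $\psi,\varphi,p,\theta$ and on the $\gamma$-independent constant $T(\varepsilon)$, the same $\beta$ serves all $t_0\ge0$, $z_0\ge0$ and $\gamma\in\S$, which is exactly strong GUAS of (\ref{eq:isc}) uniformly over $\S$.

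I expect the main obstacle to be the bookkeeping in the last step: along $w=W(z)$ the flow contributes a growth of order $(P_\theta/\theta)(t-t_0)$, so one must spend part of the per-jump decay $\mu_0$ — using both the lower impulse-density estimate coming from $\S\usubset\Sdn(\frac1\theta)$ and the strict gap $\mu_0-P_\theta>0$ guaranteed by (\ref{eq:integralr2}) — to cancel that growth and still leave a genuine decay in the hybrid time $(t-t_0)+n^\gamma_{(t_0,t]}$. A secondary, routine point is the careful treatment of solutions that reach $0$, which is precisely where the property $W(a)\to-\infty$ as $a\to0^+$, forced by (\ref{eq:integralr2}), is used.
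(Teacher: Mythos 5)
Your proposal is correct, and it takes a genuinely different route from the paper's proof. The paper first replaces $\psi$ by a larger $\psi^*\in\Ki$ (via Lemma~6.2b of \citet{mancilla2019uniform}) so as to majorize the inclusion by the single-valued ``extreme'' system $\dot z=-p(t)\varphi(z)$, $z(\tau_k)=\psi^*(z(\tau_k^-))$, and then argues in blocks: it fixes $T_0=m_0\theta$ and $k_0$ so that every window of length $T_0$ contains at least $k_0$ impulses, proves a fixed decrease of $F(z)=\int_1^z ds/(-\varphi)$ over each block of $k_0$ consecutive jumps, bounds the overshoot within a block by $\nu=F^{-1}(F(\cdot)+m_0N)$, and assembles the $\KL$ estimate first as a function of the jump counter ($\beta_2,\beta_3$) before converting to hybrid time via the lower bound on $n^\gamma_{(t_0,t]}$. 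You instead work with the inclusion directly through the single additive estimate $W(z(t))\le W(z_0)+\int_{t_0}^t p-\mu_0\,n^\gamma_{(t_0,t]}$ and then budget the per-jump decay $\mu_0$ between cancelling the flow growth $\int_{t_0}^tp\le(\tfrac{t-t_0}{\theta}+1)P_\theta$ and producing genuine decay in $t-t_0+n^\gamma_{(t_0,t]}$, exploiting the strict gap $\mu_0-P_\theta>0$ from (\ref{eq:integralr2}) together with the uniform eventual lower bound on the impulse frequency. Your approach buys self-containedness (no appeal to the auxiliary lemmas and the extreme-system reduction of the cited reference, and no need for a $\Ki$ majorant of $\psi$), an explicit $\beta\in\KL$, and a cleaner treatment of the zero level set; in particular your remark that (\ref{eq:integralr2}) forces $\int_0^1 ds/(-\varphi(s))=\infty$, so that $W$ is a bijection onto $\R$, is exactly the point the paper delegates to the proof of Theorem~5.2 in \citet{mancilla2019uniform}. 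The paper's block decomposition, in exchange, mirrors the structure of the proof of Proposition~\ref{prop:main} and reuses existing machinery. The only places where your write-up is terser than it should be are routine: the chain rule for $W\circ z$ on flow subintervals where $z$ stays positive (where $z$ is bounded away from zero, so $W$ is Lipschitz on the relevant range), and the existence of $z(T_z^-)$ for the completeness claim (which follows since $W(z(t))-\int p$ is nonincreasing along flow); neither affects correctness, and forward completeness is not even needed for the GUAS bound, which is only required on $I_z$.
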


\begin{proof}[Theorems \ref{thm:main} and \ref{thm:main2}]
 As previously mentioned, the proofs of Theorems~\ref{thm:main} and~\ref{thm:main2} follow straightforwardly from Propositions~\ref{prop:main} and~\ref{prop:main2}, respectively, and Theorem~3.1 in \citet{mancilla2019uniform}. Employing the notation and definitions in \citet{mancilla2019uniform}, consider the parametrized family of impulsive systems $\{\Sigma_{\gamma}\}_{\gamma \in \S}$, where $\Sigma_{\gamma} = (\gamma,f_\gamma,g_\gamma)$, with $f_\gamma=f$ and $g_\gamma=g$ for all $\gamma\in \S$. Here, we employ the family $\S$ of impulse-time sequences as the parameter set and the sequence $\gamma$ as a parameter \citep[see Section~II.B in][]{mancilla2019uniform}. Also consider the family of functions $\{V_{\gamma}\}_{\gamma\in \S}$ with $V_{\gamma}=V$ for all $\gamma \in \S$, where $V$ is the $(h^o,h)$-ISS Lyapunov function candidate appearing in the hypotheses of both Theorem~\ref{thm:main} and~\ref{thm:main2}. Then, $\{V_{\gamma}\}_{\gamma\in \S}$ satisfies Assumption~1 in \citet{mancilla2019uniform}, and the family of comparison systems associated with $\{\Sigma_{\gamma}\}_{\gamma \in \S}$ and $\{V_{\gamma}\}_{\gamma\in \S}$ \citep[defined in eq.~(7) of ][]{mancilla2019uniform} coincides with the comparison system (\ref{eq:isc}). Applying Theorem~3.1 of \citet{mancilla2019uniform} to the family of systems $\{\Sigma_{\gamma}\}_{\gamma\in \S}$, then the system (\ref{eq:is}) is weakly or strongly $(h^o,h)$-ISS over $\S$ when the comparison system (\ref{eq:isc}) is weakly or strongly GUAS over $\S$, respectively. Finally, the weak or strong GUAS over $\S$ of the comparison system is established in Propositions~\ref{prop:main} and~\ref{prop:main2} in each of the considered cases. \qed 
\end{proof}

\subsection{Remaining proofs}
\label{sec:rem-proofs}

It can be shown, following the lines of Proposition~2.5 in \cite{linson_jco96}, that (\ref{eq:isc}) is weakly GUAS uniformly over $\S$ if and only if the following hold
\begin{enumerate}[ci)]
 \item (Global uniform stability, GUS) there exists $\alpha\in \K$ such that \label{item:gusdef}
 \begin{align}
\label{eq:GUS}
|z(t)|\le \alpha(|z(t_0)|)\quad \forall t\in I_z,
\end{align}
for all $t_0\ge 0$, $z_0\ge 0$, $\gamma \in \S$ and $z\in \C(t_0,z_0,\gamma)$;
\item (Uniform attractivity) for all $0< \varepsilon \le R$ there exists $T=T(\varepsilon,R)>0$ such that for all $t_0\ge 0$, $z_0\ge 0$ with $|z_0|\le R$, $\gamma \in \mathcal{S}$ and $z\in \C(t_0,z_0,\gamma)$ we have that $|z(t)|\le \varepsilon$ for all $t \in [t_0+T,\infty)\cap I_z$. \label{item:uadef}
\end{enumerate}

The following result can be proved in the same manner as Proposition 2.3 in \cite{mancilla2019uniform}.
\begin{lem} \label{lem:equiv}
 Let $\S\subset \Gamma$ be UIB. Then (\ref{eq:isc}) is strongly GUAS over $\S$ if and only if it is weakly GUAS over $\S$.  
\end{lem}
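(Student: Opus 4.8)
The plan is to prove the two implications separately. The forward implication, ``strongly GUAS uniformly over $\S$ $\Rightarrow$ weakly GUAS uniformly over $\S$'', needs no assumption on $\S$: if $z(t)\le\beta(z_0,t-t_0+n^\gamma_{(t_0,t]})$ for all $t\in I_z$ with some $\beta\in\KL$, then since $n^\gamma_{(t_0,t]}\ge0$ and $\beta(z_0,\cdot)$ is nonincreasing we immediately get $z(t)\le\beta(z_0,t-t_0)$, so the same $\beta$ witnesses weak GUAS. Thus the UIB hypothesis enters only in the converse, and the mechanism behind it is that UIB makes the two ``clocks'' $t-t_0$ and $t-t_0+n^\gamma_{(t_0,t]}$ mutually bounded in a way that preserves $\KL$ estimates.

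For ``weakly GUAS $\Rightarrow$ strongly GUAS'', assume $z(t)\le\beta(z_0,t-t_0)$ uniformly over $\S$, and let $\phi:\R_{>0}\to\R_{\ge0}$ be continuous and nondecreasing with $n^\gamma_{(s,t]}\le\phi(t-s)$ for all $t>s\ge0$ and $\gamma\in\S$. Writing $r:=t-t_0$, for $r>0$ we have $t-t_0+n^\gamma_{(t_0,t]}\le r+\phi(r)=:\eta(r)$, and $\eta$ is continuous, strictly increasing (because of the $+r$ term) and unbounded on $(0,\infty)$, hence a homeomorphism of $(0,\infty)$ onto $(c,\infty)$ where $c:=\lim_{r\to0^+}\phi(r)=\inf_{r>0}\phi(r)\ge0$. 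I would then set $\sigma(s):=\eta^{-1}(s)$ for $s>c$ and $\sigma(s):=0$ for $0\le s\le c$; this $\sigma$ is continuous (since $\eta^{-1}(s)\to0$ as $s\to c^+$), nondecreasing, satisfies $\sigma(0)=0$ and $\sigma(s)\to\infty$ as $s\to\infty$, and from $t-t_0+n^\gamma_{(t_0,t]}\le\eta(r)$ together with monotonicity of $\eta^{-1}$ one checks that $\sigma(t-t_0+n^\gamma_{(t_0,t]})\le r=t-t_0$. Combining this with the weak bound and the monotonicity of $\beta(z_0,\cdot)$ gives $z(t)\le\beta\big(z_0,\sigma(t-t_0+n^\gamma_{(t_0,t]})\big)$ for all $t\in I_z$ (the case $t=t_0$ being trivial).

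It remains to upgrade $\beta(\cdot,\sigma(\cdot))$ to a genuine $\KL$ function, since $\sigma$ may be constant on $[0,c]$. I would take, e.g., $\tilde\beta(a,s):=\beta(a,\sigma(s))+\beta(a,0)e^{-s}$, which is jointly continuous, lies in $\K$ in $a$ for each fixed $s$, is strictly decreasing in $s$ whenever positive, tends to $0$ as $s\to\infty$, and dominates $\beta(a,\sigma(s))$; hence $\tilde\beta\in\KL$ and $z(t)\le\tilde\beta(z_0,t-t_0+n^\gamma_{(t_0,t]})$, proving strong GUAS uniformly over $\S$. I expect the only real obstacle to be exactly this bookkeeping in the case $\phi(0^+)>0$ — i.e.\ $\eta$ is discontinuous at the origin and its generalized inverse $\sigma$ fails to be a $\Ki$ function — plus the final $\KL$-regularization; both are routine. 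As an alternative that perhaps more literally mirrors Proposition~2.3 in \cite{mancilla2019uniform}, one could first establish, as in \cite{linson_jco96}, a characterization of strong GUAS as GUS plus ``uniform attractivity measured in the clock $t-t_0+n^\gamma_{(t_0,t]}$'', and then observe that under UIB this attractivity notion is equivalent to that in ci)--cii) via the bound $n^\gamma_{(t_0,t]}\le\phi(t-t_0)$, the GUS parts being identical.
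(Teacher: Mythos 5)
Your proof is correct: the forward implication is immediate from monotonicity of $\beta(z_0,\cdot)$, and your converse — absorbing $n^\gamma_{(t_0,t]}\le\phi(t-t_0)$ into a reparametrization $\sigma$ of the clock, handling the possible jump $\phi(0^+)>0$ with a generalized inverse, and regularizing with the additive $\beta(a,0)e^{-s}$ term to recover a genuine $\KL$ bound — is exactly the standard clock-comparison argument. The paper does not spell out its own proof (it defers to Proposition~2.3 of the cited reference), but your argument is essentially that same approach and has no gaps.
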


The proof of Proposition \ref{prop:main} requires the following results.
\begin{lem} 
  \label{lem:eqguas}
  Let $\varphi\in\P$ and $p:\R_{\ge 0}\to\R_{\ge 0}$ be locally integrable. Suppose there exists $\theta>0$ such that 
  \begin{align} \label{eq:infimum}
   N:=\inf_{t\ge 0}\int_t^{t+\theta}p(s)ds>0.
  \end{align}
Then, the system 
\begin{align} \label{eq:cseq}
 \dot w=-p(t)\varphi(w),
\end{align}
has the following properties.
\begin{enumerate}[a)]
 \item For every $t_0\ge 0$ and $w_0\ge 0$, there exists a unique forward-in-time solution $w_{t_0,w_0}:[t_0,\infty)\to \R_{\ge 0}$ of (\ref{eq:cseq}) such that $w_{t_0,w_0}(t_0)=w_0$.
 \item There exists $\beta\in \KL$ such that
 \begin{align}
  w_{t_0,w_0}(t)\le \beta(w_0,t-t_0)\quad \forall t\ge t_0,\;w_0\ge 0.
 \end{align}
\end{enumerate}
\end{lem}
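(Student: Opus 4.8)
The plan is to establish the two claimed properties of the scalar ODE \eqref{eq:cseq} separately, exploiting the structure $\dot w = -p(t)\varphi(w)$ with $\varphi\in\P$, $p\ge0$ locally integrable, and the ``persistent excitation''-type condition \eqref{eq:infimum}. First, for part a) I would note that since $\varphi\in\P$ we have $\varphi(w)>0$ for $w>0$ and $\varphi(0)=0$; hence $w\equiv 0$ is the unique solution starting from $w_0=0$ (one can invoke that $-p(t)\varphi(w)\le 0$ so any solution is nonincreasing and bounded below by $0$, plus a standard uniqueness-from-below argument, or simply note solutions cannot cross the equilibrium). For $w_0>0$, on the set where $w>0$ the equation is separable: writing $G(w):=\int_{w_1}^{w}\frac{ds}{\varphi(s)}$ for a fixed reference $w_1>0$, the solution satisfies $G(w(t)) - G(w_0) = -\int_{t_0}^t p(s)\,ds$, which determines $w(t)$ implicitly as long as the right-hand side stays in the range of $G$. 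Because $\varphi>0$ on $(0,\infty)$, $G$ is strictly increasing there, so this gives local existence and uniqueness; the solution stays in $(0,\infty)$ (it can only approach $0$ asymptotically, never reaching it in finite time unless $\int_0^{w_1} ds/\varphi(s)<\infty$, in which case the solution is simply continued by $0$ thereafter, still uniquely). Monotonicity ($w$ nonincreasing) plus the lower bound $0$ yields forward completeness, so $w_{t_0,w_0}:[t_0,\infty)\to\R_{\ge0}$ is well defined and unique.

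For part b), the key point is to produce a $\KL$ bound that is uniform in $t_0$, and here \eqref{eq:infimum} does the work. I would argue in two pieces. Uniform stability: since $w$ is nonincreasing, $w_{t_0,w_0}(t)\le w_0$ for all $t\ge t_0$, so the class-$\K$ majorization is trivial with identity. Uniform attractivity: fix $R\ge\varepsilon>0$; I must show there is $T=T(\varepsilon,R)$, independent of $t_0$, such that $w_0\le R$ implies $w_{t_0,w_0}(t_0+T)\le\varepsilon$. On any interval $[a,a+\theta]$ during which $w$ stays above $\varepsilon$, we have $\varphi(w(s))\ge m:=\min_{[\varepsilon,R]}\varphi>0$ (using that $w$ stays in $[\varepsilon,R]$ and $\varphi$ continuous and positive there), so $w(a+\theta)-w(a)=-\int_a^{a+\theta}p(s)\varphi(w(s))\,ds\le -m\int_a^{a+\theta}p(s)\,ds\le -mN$. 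Thus each window of length $\theta$ on which the solution has not yet dropped below $\varepsilon$ produces a decrease of at least $mN>0$; after at most $\lceil R/(mN)\rceil$ such windows the solution must have fallen to $\le\varepsilon$, i.e. $T:=\theta\lceil R/(mN)\rceil$ works, and crucially it does not depend on $t_0$. Combining GUS and uniform attractivity and invoking the standard equivalence (as recalled in the excerpt just before Lemma~\ref{lem:equiv}, cf. Proposition~2.5 of \cite{linson_jco96}) gives the existence of $\beta\in\KL$ with $w_{t_0,w_0}(t)\le\beta(w_0,t-t_0)$ uniformly in $t_0$.

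An alternative, cleaner route for b) is to avoid the $\KL$-equivalence machinery and build $\beta$ explicitly. Partition time into blocks $[t_0+k\theta, t_0+(k+1)\theta)$; on each block the decrease argument above shows $w$ drops by at least $\eta(w\text{-level})N$ where $\eta(r):=\min_{[r/2,r]}\varphi$ or a similar modulus, so one gets a recursion $w((k+1)\theta)\le \Theta(w(k\theta))$ for an explicit $\Theta\in\K_\infty$ with $\Theta(r)<r$ for $r>0$; iterating and interpolating over the blocks yields $\beta$. I would likely present the first route since the equivalence is already cited. The main obstacle is making the attractivity estimate genuinely uniform in $t_0$: the naive bound $\varphi(w(s))\ge\varphi(\varepsilon)$ is false since $\varphi$ need not be monotone, so one must use continuity of $\varphi$ on the compact $[\varepsilon,R]$ to get the positive lower bound $m$, and one must invoke \eqref{eq:infimum} — rather than just $\int_{t_0}^\infty p=\infty$ — precisely to guarantee that every window of length $\theta$ contributes a fixed amount $N>0$ of ``excitation,'' irrespective of where the window sits; without the uniform infimum, attractivity could fail to be uniform in the initial time. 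The rest is routine.
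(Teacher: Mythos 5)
Your proposal is correct and follows essentially the same route as the paper: separability and monotonicity give existence, forward-in-time uniqueness and completeness in part a), and part b) combines global uniform stability (from $w$ nonincreasing) with uniform attractivity driven by \eqref{eq:infimum}, then invokes the standard equivalence with a $\KL$ estimate. The only cosmetic difference is in quantifying the per-window decay: the paper works in the coordinate $F(r)=\int_1^r ds/\varphi(s)$ and argues by contradiction that $w$ must fall below $\varepsilon$ within $m_0\theta$ time units, whereas you lower-bound $\varphi$ by its minimum over the compact set $[\varepsilon,R]$ to get a fixed decrease $mN$ per $\theta$-window; both yield a $T(\varepsilon,R)$ independent of $t_0$.
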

\begin{proof}
a) Consider the continuous and increasing function $F:(0,\infty)\to (a,b)$
\begin{align}
 F(r)=\int_{1}^{r}\frac{ds}{\varphi(s)}
\end{align}
with $a=\lim_{r\to 0^+}F(r)$ and $b=\lim_{r\to \infty}F(r)$. $F$ is 
bijective, and therefore its inverse $F^{-1}:(a,b)\to (0,\infty)$ is continuous and increasing. 

Let $w$ be a solution of (\ref{eq:cseq}) such that $w(t_0)=w_0$. Since $w$ is nonincreasing and bounded from below, there is no finite escape time $t\ge t_0$ and therefore $w$ is defined on $[t_0,\infty)$. If $w(t)>0$ on an interval $[t_0,t_1)$, then a simple calculation shows that $w(t)=F^{-1}(F(w_0)-\int_{t_0}^tp(s)ds)$ for all $t\in [t_0,t_1)$. If for some $t_1\ge t_0$, $w(t_1)=0$, then the facts that $0\le w(t)$ for all $t\ge t_1$ and that $w$ is nonincreasing imply that $w(t)=0$ for all $t\ge t_1$. This shows the uniqueness forward-in-time of the solutions of (\ref{eq:cseq}).  

b) It suffices to show that the nonnegative solutions of (\ref{eq:cseq}) satisfies c\ref{item:gusdef}) and c\ref{item:uadef}). Let $t_0\ge 0$, $w_0\ge 0$ and $w_{t_0,w_0}:[t_0,\infty)\to \R_{\ge 0}$ be the unique solution of (\ref{eq:cseq}) such that $w(t_0)=w_0$. Since $w_{t_0,w_0}$ is nonincreasing, we have that $w_{t_0,w_0}(t)\le w_0$. In consequence c\ref{item:gusdef}) holds with $\alpha(r)\equiv r$. 

As for c\ref{item:uadef}), let $0<\varepsilon\le R$. Let $s\ge 0$ be so that $F^{-1}(F(R)-s)=\varepsilon$, pick $m_0\in \N$ so that $m_0N>s$, where $N$ is the constant appearing in (\ref{eq:infimum}), and define $T=m_0\theta$. Let $t_0\ge 0$ and $w_0\ge 0$. We claim that there is a $t_0\le t_1\le t_0+T$ such that $w_{t_0,w_0}(t_1)\le \varepsilon$. If $w_{t_0,w_0}(t)> \varepsilon$ for all $t\in [t_0, t_0+T]$, then $w_{t_0,w_0}(t)=F^{-1}(F(w_0)-\int_{t_0}^{t_0+T}p(s)ds)\le F^{-1}(F(R)-s)=\varepsilon$, since $\int_{t_0}^{t_0+T}p(s)ds\ge mN\ge s$, which is absurd. Then, since $w_{t_0,w_0}$ is nonincreasing, it follows that $w_{t_0,w_0}(t)\le \varepsilon$ for all $t\ge t_1$. \qed
\end{proof}
\begin{lem} 
  \label{lem:ufc}
  Let $\varphi, \psi \in \P$ and $p:\R_{\ge 0} \to \R_{\ge 0}$ be locally integrable. Then, maximally defined solutions of the comparison system (\ref{eq:isc}) are forward complete. Moreover, if $z\in\C(t_0,z_0,\gamma)$ with $t_0 \ge 0$, $z_0 \ge 0$ and $\gamma\in\Gamma$ satisfies $z(s) = 0$ for some $s\ge t_0$, then $z(t) = 0$ for all $t\ge s$. 
\end{lem}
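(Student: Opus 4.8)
The plan is to prove the two assertions in turn, exploiting that the continuous dynamics in \eqref{eq:is-ctc} force solutions to be nonincreasing during flows (since $p\ge 0$ and $\varphi\in\P$ give $-p(t)\varphi(\xi)\le 0$ for $\xi\ge 0$), while the jump rule \eqref{eq:is-stc} maps finite pre-jump values to finite post-jump values. For forward completeness I would argue by contradiction: take $z\in\C(t_0,z_0,\gamma)$ with domain $[t_0,T_z)$ and suppose $T_z<\infty$. Since $\gamma\in\Gamma$ has no finite limit point, $\gamma\cap[t_0,T_z]$ is finite; let $\tau_K$ be its largest element, or $\tau_K:=t_0$ if this set is empty. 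On the last flow piece $[\tau_K,T_z)$ one has $\dot z(t)\le -p(t)\varphi(z(t))\le 0$ for a.e.\ $t$, so $z$ is nonincreasing there, and since $z\ge 0$ by definition the limit $\zeta:=\lim_{t\to T_z^-}z(t)$ exists and is finite. (Piece-by-piece monotonicity, together with continuity of $\psi$ and finiteness of $\gamma\cap[t_0,T_z]$, in fact shows $z$ is bounded on $[t_0,T_z)$ via iterated maxima of $\psi$ over compact intervals, but only the one-sided limit $\zeta$ is needed here.) I then extend $z$ to a solution on a strictly larger interval, contradicting maximality. If $T_z\notin\gamma$, choose $\varepsilon>0$ with $(T_z,T_z+\varepsilon)\cap\gamma=\emptyset$ (possible because $\gamma$ is locally finite and $T_z\notin\gamma$; note $T_z$ then lies in a flow interval) and append on $[T_z,T_z+\varepsilon)$ a nonnegative solution of $\dot w=-p(t)\varphi(w)$ with $w(T_z)=\zeta$ (such a solution exists by Carath\'eodory existence, $p$ being locally integrable and $\varphi$ continuous, and remains in $[0,\zeta]$ by the same nonincreasing-nonnegative argument). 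If $T_z\in\gamma$, set $z(T_z):=0$, which satisfies $0\le z(T_z)\le\psi(\zeta)$, and append $z\equiv 0$ on $[T_z,T_z+\varepsilon)$ (legitimate since $\dot z=0=-p(t)\varphi(0)$). In either case the concatenation is a solution of \eqref{eq:isc} properly extending $z$, a contradiction; hence $T_z=\infty$.

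For the absorption property, let $z\in\C(t_0,z_0,\gamma)$ satisfy $z(s)=0$ for some $s\ge t_0$; by the first part $T_z=\infty$. Pick the unique $k\ge 0$ with $s\in[\tau_k,\tau_{k+1})$. On $[s,\tau_{k+1})\subseteq J_k$ the solution flows, so $\dot z\le 0$ a.e.; since $z$ is therefore nonincreasing, $z\ge 0$, and $z(s)=0$, we get $z\equiv 0$ on $[s,\tau_{k+1})$. If $\tau_{k+1}<\infty$, then $z(\tau_{k+1}^-)=0$, hence $0\le z(\tau_{k+1})\le\psi(z(\tau_{k+1}^-))=\psi(0)=0$ because $\psi\in\P$, so $z(\tau_{k+1})=0$. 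Repeating this on $[\tau_{k+1},\tau_{k+2})$, and so on, and using that the intervals $[\tau_j,\tau_{j+1})$ for $j\ge k$ cover $[s,\infty)$ (as $\gamma$ has no finite limit point), we conclude $z(t)=0$ for all $t\ge s$.

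The step I expect to be the main obstacle is the extension construction in the forward-completeness part, where one must exhibit a genuine solution of \eqref{eq:isc} on a strictly larger interval. The facts that render this routine are: (i) $\dot z\le -p(t)\varphi(z)\le 0$ during flows, so there is no finite escape time and the left limit $\zeta$ at $T_z$ is well-defined and finite; (ii) the flow equation $\dot w=-p(t)\varphi(w)$ admits a nonnegative forward solution from every nonnegative initial value without any uniqueness hypothesis on $\varphi$; and (iii) $\gamma$ is locally finite, so $T_z$ is either an isolated impulse time (extend by a trivial jump to $0$) or lies strictly inside a flow interval (extend by continuing the flow). A secondary technicality is that $\psi$ and $\varphi$ are not assumed monotone, so any a priori size bound on $z$ over a bounded time window must be obtained through maxima of $\psi$ over compact sets and the local finiteness of $\gamma$, rather than by direct monotonicity of the rates.
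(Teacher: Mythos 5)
Your proof is correct and follows essentially the same route as the paper's: extract the finite left limit $z(T_z^-)$ from piecewise monotonicity, extend past $T_z$ by a flow solution (or a jump followed by flow) to contradict maximality, and propagate the zero set forward using $\psi(0)=0$ interval by interval. The only cosmetic difference is that at an impulse time $T_z\in\gamma$ you select the post-jump value $0$ from $[0,\psi(z(T_z^-))]$ and continue with the zero solution, whereas the paper selects $\psi(z(T_z^-))$ and continues with a generic flow solution; both are valid choices within the difference inclusion.
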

\begin{proof}
Let $z\in \C(t_0,z_0,\gamma)$ with $t_0\ge 0$, $z_0\ge 0$ and $\gamma \in \Gamma$. Suppose that $[t_0,T_z)$, with $T_z<\infty$ is the maximal interval of definition of $z$. Since $z$ is nonincreasing between consecutive impulse times and bounded from below, there exists $z(T_z^-)$. If $T_z\notin \gamma$, then there exists $\delta>0$ so that $[T_z,T_z+\delta)\cap \gamma=\emptyset$. Define $z^*(t)=z(t)$ if $t\in [t_0,T_z)$ and $z^*(t)=w^*(t)$ for $t\in [T_z,T_z+\delta)$, where $w^*$ is any solution of (\ref{eq:cseq}) such that $w^*(T_z)=z(T_z^-)$. If $T_z\in \Gamma$, take $\delta>0$ so that $(T_z,T_z+\delta)\cap \gamma=\emptyset$, and define $z^*$ as before, but with $w^*(T_z)=\psi(z(T_z^-))$. In both cases $z^*$ is a proper extension of $z$ which is solution of (\ref{eq:isc}), which is absurd. This show that $T_z=+\infty$. Suppose that $z(s)=0$ for some $s\ge t_0$. If $s_0=s$ and $s_1<s_2<\cdots$ are the impulse times in $\gamma\cap(s,\infty)$, we have that $z(t)=0$ for all $t\in [s_0,s_1)$ since $z$ is nonincreasing and nonnegative. Taking into account that $0\le z(s_1)\le \psi(z(s_1^-))=\psi(0)=0$, it follows that $z(t)=0$ for all $t\in [s_1,s_2)$. By proceeding in this way, it follows that $z(t)=0$ for all $t\ge s_0$. \qed 
\end{proof}
\begin{proof}[Proposition \ref{prop:main}]
  Let $\delta=N-M>0$. 

  \ref{item:Mgt0}) Let $M>0$ and $\S\usubset\Sup\left(\frac{1}{\theta}\right)$. 
  By~(\ref{eq:Supu-def}) and Definition~\ref{def:usubset}, for each $\varepsilon>0$ there is a $T(\varepsilon)>0$ such that 
\begin{equation}
  \label{eq:NSupineq}
  \frac{n^\gamma_{(s,s+t]}}{t} \leq \frac{1}{\theta} + \varepsilon \quad \forall t\geq T(\varepsilon), \forall s\geq 0, \forall\gamma \in \S.
\end{equation}
Let $\varepsilon_0= \frac{\delta}{2 \theta M}$. Pick $m_0\in \N$ such that $m_0\theta=:T_0\ge T(\varepsilon_0)$ and let $k_0=\lfloor (\frac{1}{\theta}+\varepsilon_0)T_0\rfloor$. Then, $n^\gamma_{(s,s+T_0]} \le k_0$ for all $s\ge 0$ and all $\gamma \in \S$. Pick any $\bar\psi \in \Ki$ satisfying $\max\{\psi,\id\} \le \bar\psi$, and define $\eta\in\Ki$ as $\eta=\bar{\psi}^{k_0}$, i.e.
\begin{align*}
  \eta := \underbrace{\bar\psi \comp \cdots \comp \bar\psi}_{k_0\text{ times}}.
\end{align*}
Note that if $k\in \N_0$ and $k\le k_0$, then\footnote{We define $\bar\psi^0 = \id$.} $\bar{\psi}^{k}\le \eta$.

Let $t_0 \ge 0$, $z_0 \ge 0$, $\gamma\in \S$ and $z\in \C(t_0, z_0, \gamma)$. Due to Lemma \ref{lem:ufc}, $z$ is defined for all $t\ge t_0$ and if $z(s)=0$ for some $s\ge t_0$ then $z(t)=0$ for all $t\ge s$. Taking into account that $z$ is nonincreasing between consecutive impulse times, that $\psi(z(t^{-}))\le \bar{\psi}(z(t^{-}))$ at each $t\in \gamma$, and that for all $t\in [t_0,t_0+T_0]$, $k=n^\gamma_{(t_0,t]} \le k_0$, it follows that $z(t)\le \bar{\psi}^k(z_0)\le \eta(z_0)$ for all $t\in [t_0,t_0+T_0]$; hence,
\begin{align}
  \label{eq:zTbnd2}
  \sup_{t_0 \le t \le t_0+T_0} z(t) \le \eta(z_0).
\end{align}

If $\gamma \cap (t_0,t_0+T_0]=\emptyset$, then $\dot z(t) \le -p(t)\varphi(z(t))$ from~\eqref{eq:is-ctc} implies that
\begin{align}\label{eq:b1}
 \int_{z(t_0+T_0)}^{z(t_0)}\frac{ds}{\varphi(s)}\ge \int_{t_0}^{t_0+T_0}p(t)dt\ge m_0N.
\end{align}
If $\gamma \cap (t_0,t_0+T_0]\neq \emptyset$, consider the increasing sequence $\{t_i\}_{i=1}^{k} := \gamma \cap (t_0,t_0+T_0]$. Note that $k\le k_0$. Suppose that $z(t_0+T_0)>0$. Then $z(t)>0$ for all $t\in [t_0,t_0+T_0]$ due to Lemma \ref{lem:ufc}. From~\eqref{eq:isc}, (\ref{eq:int}) and the definitions of $\varepsilon_0$, $m_0$ and $T_0$ it follows that
\begin{align}
  \int\limits_{z(t_0+T_0)}^{z(t_0)} \frac{ds}{\varphi(s)} &= \sum_{i=0}^{k-1} \int\limits_{z(t_{i+1})}^{z(t_i)} \frac{ds}{\varphi(s)} + \int\limits_{z(t_0+T_0)}^{z(t_k)}\frac{ds}{\varphi(s)} 
                                   = \sum_{i=0}^{k-1} \left( \int\limits_{z(t_{i+1}^-)}^{z(t_i)} \frac{ds}{\varphi(s)} + \int\limits_{z(t_{i+1})}^{z(t_{i+1}^-)} \frac{ds}{\varphi(s)} \right)+\int\limits_{z(t_0+T_0)}^{z(t_k)}\frac{ds}{\varphi(s)}\notag\\
  &\ge \sum_{i=0}^{k-1} \int\limits_{z(t_{i+1}^-)}^{z(t_i)} \frac{ds}{\varphi(s)} + \int\limits_{z(t_0+T_0)}^{z(t_k)}\frac{ds}{\varphi(s)}+ \sum_{i=1}^{k} \int\limits_{\psi(z(t_{i}^-))}^{z(t_{i}^-)} \frac{ds}{\varphi(s)}  
  \ge \int_{t_0}^{t_0+T_0}p(t)dt - kM \label{eq:b2exp} \\
  &\ge m_0N-kM \ge m_0N-k_0M \ge m_0N-\left (\frac{1}{\theta}+\varepsilon_0 \right )T_0=\frac{m_0\delta}{2}>0, \label{eq:b2}
\end{align}
where the first term in the first inequality in (\ref{eq:b2exp}) follows from consideration of (\ref{eq:is-ctc}). Taking into account (\ref{eq:b1}) and (\ref{eq:b2}) it follows that if $z(t_0+T_0)> 0$ then 
\begin{align} \label{eq:b3}
 \int\limits_{z(t_0+T_0)}^{z(t_0)} \frac{ds}{\varphi(s)}\ge \frac{m_0\delta}{2}
 \end{align}
and, in consequence, $z(t_0+T_0)<z(t_0)$. Repeating the preceding reasoning on each of the intervals $[t_0+(\ell-1) T_0, t_0+\ell T_0]$, with $\ell \in \N$, we have that $0<z(t_0+\ell T_0)<z(t_0+(\ell-1) T_0)$ or $z(t_0+\ell T_0)=0$ and that $\sup_{t_0+(\ell-1)T_0 \le t \le t_0+\ell T_0} z(t) \le \eta(z(t_0+(\ell-1)T_0))$. Thus $z(t)\le \eta(z_0)$ for all $t\ge t_0$ and the comparison system (\ref{eq:isc}) is GUS according to c\ref{item:gusdef}).

Next, let $0<\varepsilon\le R$. Set $\bar\varepsilon := \eta^{-1}(\varepsilon)\le \varepsilon$ and pick $\ell\in \N$ such that 
\begin{align}
 \int_{\bar\varepsilon}^R \frac{ds}{\varphi(s)}\le \frac{1}{2}m_0\ell\delta. 
\end{align}
Let $z\in \C(t_0,z_0,\gamma)$ with $t_0\ge 0$, $z_0\ge 0$ and $\gamma\in \S$. Suppose that $z(t_0+\ell T_0)>0$. Then, from (\ref{eq:b3}) we have
\begin{align}
 \int_{z(t_0+\ell T_0)}^R\frac{ds}{\varphi(s)} &\ge \int_{z(t_0+\ell T_0)}^{z(t_0)}\frac{ds}{\varphi(s)}
 =\sum_{r=0}^{\ell-1} \int\limits_{z(t_0+(r+1) T_0)}^{z(t_0+r T_0)}\frac{ds}{\varphi(s)}\ge \frac{1}{2}m_0\ell \delta \ge \int_{\bar\varepsilon}^R \frac{ds}{\varphi(s)}, 
\end{align}
which implies that $z(t_0+\ell T_0)\le \bar\varepsilon$ and then that $z(t)\le \eta(\bar\varepsilon)=\varepsilon$ for all $t \ge t_0+\ell T_0$. If $z(t_0+\ell T_0)=0$ then $z(t)=0$ for all $t\ge t_0+\ell T_0$. Therefore, uniform attractivity according to c\ref{item:uadef}) follows with $T(\varepsilon,R)=\ell T_0$. We have thus established that the comparison system is weakly GUAS uniformly over $\S$. From Lemma~\ref{lem:relats}~\ref{item:rel-uib-sad}), then $\S$ is UIB. Since the comparison system (\ref{eq:isc}) is weakly GUAS uniformly over $\S$ and $\S$ is UIB, Lemma \ref{lem:equiv} that (\ref{eq:isc}) is strongly GUAS uniformly over $\S$. 

\ref{item:Meq0}) Let $M=0$. From~(\ref{eq:int}), then $\psi(r)\le r$ for all $r\ge 0$. Then, the result of 
Lemma~\ref{lem:eqguas} implies that the comparison system is weakly GUAS uniformly over $\Gamma$. To see this, let $z\in \C(t_0,z_0,\gamma)$, with $t_0\ge 0$, $z_0\ge 0$ and $\gamma\in \Gamma$. Applying well-known comparison results for ordinary differential equations, using in addition the fact that $\psi(r)\le r$ and that, due to the uniqueness of the solutions of (\ref{eq:cseq}), $w_{t_0,z_1}\le w_{t_0,z_0}$ if $0\le z_1\le z_0$, it follows that $z(t)\le w_{t_0,z_0}(t)$ for all $t\ge t_0$. In consequence, $z(t)\le \beta(z_0,t-t_0)$ for all $t\ge t_0$, where $\beta\in\KL$ is given by Lemma~\ref{lem:eqguas}. Applying Lemma \ref{lem:equiv}, then the comparison system is strongly GUAS uniformly over $\S$ for any UIB family $\S\subset \Gamma$ .

\ref{item:Mlt0}) 
Let $M<0<N$. From~(\ref{eq:int}), then $0<\psi(r)<r$ for all $r>0$. Define $\bar\psi(r)=\max_{0\le s\le r}\psi(s)$. Then $\bar \psi$ is continuous and nondecreasing, $0< \bar \psi(r)<r$ for all $r>0$ and $\bar \psi(0)=0$. Consider the following differential/difference inclusion system
\begin{subequations}
  \label{eq:isc4}
  \begin{align}
    \label{eq:is-ctc4}
    \dot{z}(t) &\in(-\infty,-p(t)\varphi(z(t))], & 
    t\notin \gamma,  \\  
    \label{eq:is-stc4}
    z(t) &\in[0,\bar \psi(z(t^-))], & 
    t\in \gamma.
  \end{align}
\end{subequations}
We have that (\ref{eq:isc}) is strongly GUAS uniformly over $\S$ if (\ref{eq:isc4}) is, since every solution of (\ref{eq:isc}) is also a solution of (\ref{eq:isc2}). Due to Remark~3 in \cite{mancilla2019uniform}, for checking that (\ref{eq:isc4}) is strongly GUAS over $\Gamma$, it suffices to only consider the solutions of the impulsive system
\begin{subequations}
  \label{eq:isc5}
  \begin{align}
    \label{eq:is-ctc5}
    \dot{z}(t) &=-p(t)\varphi(z(t)), & 
    t\notin \gamma,  \\  
    \label{eq:is-stc5}
    z(t) &=\bar \psi(z(t^-)), & 
    t\in \gamma,
  \end{align}
\end{subequations}
corresponding to initial times $t_0\ge 0$, initial conditions $z_0\ge 0$ and $\gamma \in \Gamma$. Consider the difference equation
\begin{align}\label{eq:diff}
w_{k+1}=\bar \psi(w_k),\quad w_0 \ge 0. 
\end{align} 
Since $\bar\psi(r)<r$ for all $r>0$, the discrete-time system is GUAS and there must exist $\beta_d \in \KL$ such that every solution of this difference equation satisfies $w_k\le \beta_d(w_0,k)$ for all $k\in \N_0$. 

Let $z$ be a solution of (\ref{eq:isc5}) with $\gamma \in \Gamma$ such that $z(t_0)=z_0$, with $t_0\ge 0$ $z_0\ge 0$. Let $t> t_0$ and $k=n^{\gamma}_{(t_0,t]}$. Suppose that $k\ge 1$ and let $t_0<t_1<\cdots<t_k\le t$ be all the impulse times within $\gamma \cap (t_0,t]$. Define $z_j=z(t_j)$ for $j=1,\ldots,k$.  Since $z$ is nonincreasing between impulse times and $\bar \psi$ is nondecreasing, it can be proved by induction on $j$ that $z_j\le w_j$ for all $j=0,\ldots,k$, where $\{w_j\}_{j=0}^{\infty}$ is the solution of (\ref{eq:diff}) with $w_0=z_0$. Taking into account that $z(t)\le z(t_k)=z_k$ and that $w_k\le \beta_d(z_0,k)$, it follows that $z(t)\le \beta_d(z_0,k)$. By the forward-in-time uniqueness of the solutions of (\ref{eq:cseq}) we have that $w_{t_0,\zeta}(s)\le w_{t_0,\zeta^*}(s)$ for all $s\ge t_0$ if $0\le \zeta\le \zeta^*$ and that $w_{t_0^*,\zeta^*}(t)=w_{t_0,\zeta}(t)$ for all $t\ge t_0^*$ if $\zeta^*= w_{t_0,\zeta}(t_0^*)$. Then, taking into account the latter and that $w_{t_j,z_j}(t_{j+1})\ge z_{j+1}$ for all $j=0,\ldots, k-1$, it follows that $z_{k}\le w_{t_0,z_0}(t_k)$ and then that $z(t)\le w_{t_0,z_0}(t)\le \beta(z_0,t-t_0)$. When $k=0$, we have that $z(t)\le z_0\le \beta_d(z_0,0)$ and that $z(t)=w_{t_0,z_0}(t)\le \beta(z_0,t-t_0)$. Since $k=n^\gamma_{(t_0,t]}$, we then have that
\begin{align*}
z(t)\le \min\{\beta(z_0,t-t_0),\beta_d(z_0,n^\gamma_{(t_0,t]})\}\quad \forall t\ge t_0.
\end{align*}
By considering $\beta^*=\max\{\beta,\beta_d\}\in \KL$ it follows that for all $t\ge t_0$
\begin{align*}
z(t)&\le \min\{\beta^*(z_0,t-t_0),\beta^*(z_0,n^\gamma_{(t_0,t]})\}
\le \beta^*(z_0,\max\{t-t_0,n^\gamma_{(t_0,t]}\}).
\end{align*}
Since for every $a,b\in\R_{\ge 0}$ it happens that $\max\{a,b\} \ge (a+b)/2$, and given that $\beta^* \in \KL$, then
for all $t\ge t_0$,
\begin{align*}
\beta^*(z_0,\max\{t-t_0,n^\gamma_{(t_0,t]}\}) 
&\le \beta^*\left(z_0,\frac{t-t_0+n^\gamma_{(t_0,t]}}{2}\right)
\end{align*}
Since $\beta^*(\cdot,\cdot/2) \in \KL$, this completes the proof.\qed
\end{proof}


\begin{proof}[Proposition~\ref{prop:main2}] According to Lemma 6.2b) in \citet{mancilla2019uniform}, there exists $\psi^*\in \Ki$ such that $\psi\le \psi^*$ and (\ref{eq:integralr2}) holds with $\psi^*$ in place of $\psi$. Consider the comparison system 
\begin{subequations}
  \label{eq:isc2}
  \begin{align}
    \label{eq:is-ctc2}
    \dot{z}(t) &\in(-\infty,-p(t)\varphi(z(t))], & 
    t\notin \gamma,  \\  
    \label{eq:is-stc2}
    z(t) &\in[0,\psi^*(z(t^-))], & 
    t\in \gamma.
  \end{align}
\end{subequations}
We have that (\ref{eq:isc}) is strongly GUAS uniformly over $\S$ if (\ref{eq:isc2}) is, since every solution of (\ref{eq:isc}) is also a solution of (\ref{eq:isc2}).
Due to Remark~3 in \cite{mancilla2019uniform}, for checking that (\ref{eq:isc2}) is strongly GUAS over $\S$, it suffices to only consider the solutions of the impulsive system
\begin{subequations}
  \label{eq:isc3}
  \begin{align}
    \label{eq:is-ctc3}
    \dot{z}(t) &=-p(t)\varphi(z(t)), & 
    t\notin \gamma,  \\  
    \label{eq:is-stc3}
    z(t) &=\psi^*(z(t^-)), & 
    t\in \gamma,
  \end{align}
\end{subequations}
corresponding to initial times $t_0\ge 0$, initial conditions $z_0\ge 0$ and $\gamma \in \S$. 
From the proof of Theorem 5.2 in \cite{mancilla2019uniform}, it follows that the continuous function $F:(0,\infty)\to \R$
\begin{align}
 F(r)=\int_{1}^{r}\frac{ds}{-\varphi(s)}
\end{align}
is bijective, and that $w_{t_0,w_0}(t)=F^{-1}(F(w_0)+\int_{t_0}^t p(s)ds)$, $t\ge t_0$, is the unique forward-in-time solution of the initial value problem $\dot w=-p(t)\varphi(w)$, $w(t_0)=w_0$, corresponding to $t_0\ge 0$ and $w_0>0$. It also holds that this initial value problem has a unique forward-in-time solution for $t_0\ge 0$ and $w_0=0$, which is the identically zero function. The latter implies that the maximal solutions of (\ref{eq:isc3}) corresponding to nonnegative initial times $t_0$ and initial conditions $w_0$ are unique and forward complete.   
Define
\begin{align*}
  M &:=\inf_{a>0} \int_{\psi^*(a)}^a\frac{ds}{- \varphi(s)}, &N &:=\sup_{t\ge 0}\int_{t}^{t+\theta}p(s)\:ds.
\end{align*}
Note that $M>N\ge 0$.
Set $\delta=M-N>0$ and $\varepsilon_0=\frac{\delta}{2\theta M}$. Since $\S\usubset \Sdn^{u}\left( \frac{1}{\theta} \right)$, there exists $T(\varepsilon_0)>0$ such that $\frac{n^\gamma_{(s,s+t]}}{t}\ge \frac{1}{\theta}-\varepsilon_0$ for all $t\ge T(\varepsilon_0)$, all $s\ge 0$ and all $\gamma\in\S$. Pick $m_0\in \N$ so that $T_0:=m_0\theta \ge T(\varepsilon_0)$ and let $k_0:=\lceil (\frac{1}{\theta}-\varepsilon_0)m_0\theta \rceil$. Note that for any $\gamma\in \S$ the $k_0$-th impulse time after any $t\ge 0$ must belong to the interval $[t,t+T_0]$. Define $\nu(r)=F^{-1}(F(r)+m_0N)$ if $r>0$ and $\nu(0)=0$. We have that $\nu \in \Ki$. 

Let $z$ be a solution of (\ref{eq:isc3}) corresponding to $t_0\ge 0$, $z_0>0$ and $\gamma \in \Gamma$. Since $z$ is nondecreasing between impuse times, and $\psi^*\in \Ki$, it follows that $z(t)>0$ for all $t\ge t_0$.  Let $w_{t_0,z_0}$ as above, then, due to the uniqueness of the solutions of the initial value problems $\dot{w}=-p(t)\varphi(w)$, $w(t_0)=w_0$ and the fact that $\psi^*(r)<r$ for all $r>0$, since $M>0$, we have that $z(t)\le w_{t_0,z_0}(t)=F^{-1}(F(w_0)+\int_{t_0}^t p(s)ds)$ for all $t\ge t_0$. In particular, taking into account that $\int_{t_0}^{t_0+T_0} p(s)ds\le m_0N$, it follows that
\begin{align}\label{eq:nu}
 z(t) \le \nu(z_0)\quad \forall t\in [t_0,t_0+T_0]. 
\end{align}
Let $t_1<t_2<\cdots$ be the sequence of impulse times after time $t_0$, that is $\{t_k\}_{k=1}^{\infty}=\gamma \cap (t_0,\infty)$. Note that necessarily $t_{k_0}\in [t_0,t_0+T_0]$. From (\ref{eq:isc3}) it follows that
\begin{align} 
  \int\limits_{z(t_0)}^{z(t_{k_0})} \frac{ds}{-\varphi(s)} &= \sum_{k=0}^{k_0-1} \int_{z(t_k)}^{z(t_{k+1}^-)} \frac{ds}{-\varphi(s)}+ \sum_{k=1}^{k_0}\int_{z(t_k^-)}^{z(t_{k})}\frac{ds}{-\varphi(s)} 
=\sum_{k=0}^{k_0-1}\int_{z(t_k)}^{z(t_{k+1}^-)}\frac{ds}{-\varphi(s)}+ \sum_{k=1}^{k_0}\int_{z(t_k)^-}^{\psi^*(z(t_{k}))}\frac{ds}{-\varphi(s)} \notag \\
&\le \int_{t_0}^{t_{k_0}} p(s)ds - k_0M \le m_0N-\left (\frac{1}{\theta}-\varepsilon_0\right)m_0\theta M 
  \le -\frac{1}{2}m_0\delta. \label{eq:deltad2}
 \end{align}
In consequence $z(t_{k_0})<z(t_0)$. Taking into account (\ref{eq:nu}), (\ref{eq:deltad2}), and applying the preceding reasoning with $t_{\ell k_0}$ as $t_0$ and $t_{(\ell+1) k_0}$ as $t_{k_0}$, we have that $\{z(t_{\ell k_0})\}_{\ell=0}^{\infty}$ is a decreasing sequence, and that for all $t\in [t_{\ell k_0},t_{(\ell+1) k_0}]$, we have that $z(t)\le \nu(z(t_{\ell k_0}))\le \nu(z_0)$ since $[t_{\ell k_0},t_{(\ell+1) k_0}]\subset [t_{\ell k_0},t_{\ell k_0}+T_0]$. 
Taking into account that (\ref{eq:deltad2}) holds with $t_{\ell k_0}$ and $t_{(\ell+1) k_0}$ in place of $t_0$ and $t_{k_0}$, it follows that
\begin{align}
 \int_{z(t_0)}^{z(t_{\ell k_0})}\frac{ds}{-\varphi(s)}\le -\frac{1}{2}\ell m_0\delta,
\end{align}
and therefore $z(t_{\ell k_0})\le F^{-1}(F(z(t_0))-\frac{1}{2}\ell m_0\delta)$. Define
\begin{align*}
  \beta_1(r,s) &:=
                 \begin{cases}
                   F^{-1}(F(r)-\frac{1}{2}m_0\delta s) &\text{if $r>0$ and $s\ge 0$},\\
                   0 &\text{if $r=0$ and $s\ge 0$,}
                 \end{cases}
\end{align*}
and let $\beta_2=\nu\comp\beta_1$. Then $\beta_2\in \KL$ and $z(t)\le \beta_2(z(t_0),\ell)$ if $t\in [t_{\ell k_0},t_{(\ell+1) k_0})$. Note that for such a value of $t$, $\ell k_0\le n^\gamma_{(t_0,t]} < (\ell+1)k_0$ and then  $\ell=\lfloor n^\gamma_{(t_0,t]}/k_0 \rfloor$. In consequence,
\begin{align}
 z(t)\le \beta_2(z(t_0),\lfloor n^\gamma_{(t_0,t]}/k_0 \rfloor),\quad \forall t\ge t_0.
\end{align}
Define $\beta_3:\R_{\ge 0}\times \R_{\ge 0}\to \R_{\ge 0}$ via
\begin{align*}
  \beta_3(r,s) = 
  \begin{cases} 
    (2-\frac{s}{k_0})\beta_2(r,0) & 0\le s<k_0,\;r\ge 0, \\
    \beta_2(r,\frac{s}{k_0}-1),   & s\ge k_0,\;r\ge 0.
  \end{cases}
\end{align*}
Then $\beta_3\in \KL$ and 
\begin{align}
  \label{eq:beta3}
  z(t)\le \beta_3(z(t_0),n^\gamma_{(t_0,t]}),\quad \forall t\ge t_0.
\end{align}
Since $n^\gamma_{(t_0,t]} \ge \left (\frac{1}{\theta}-\varepsilon_0\right ) (t-t_0)$ if $t-t_0\ge T_0$, it follows that
\begin{align} 
  \label{eq:bound4}
  \frac{1}{2}n^\gamma_{(t_0,t]} \ge \max\{\kappa (t-t_0)-\kappa T_0,0\}\quad\forall t\ge t_0,
\end{align}
where $\kappa=\frac{1}{2}\left (\frac{1}{\theta}-\varepsilon_0\right )$.
The combination of (\ref{eq:beta3}) with (\ref{eq:bound4}) yields that, for all $t\ge t_0$,
\begin{align*} \label{eq:bound5}
 z(t)\le \beta_3\left (z(t_0),\frac{n^\gamma_{(t_0,t]}}{2}+\max\{\kappa (t-t_0)-\kappa T_0,0\}\right ).
\end{align*}
From the latter and proceeding as in the last part of the proof of Theorem 5.2 in \cite{mancilla2019uniform} it follows that there exits $\beta\in \KL$ so that 
\begin{align*}
 z(t)\le \beta\left (z(t_0),t-t_0 + n^\gamma_{(t_0,t]} \right )\quad \forall t\ge t_0.
\end{align*}
This establishes the required strong GUAS property.\qed
\end{proof}

\section{Examples}
\label{sec:examples}

\subsection{Destabilizing impulses}
\label{sec:destab-imp}

Consider a scalar and single-input impulsive system of the form (\ref{eq:is}) with time-invariant flow and jump maps, defined as
\begin{align*}
  f(t,\xi,\mu) = \bar f(\xi,\mu) &=
  \begin{cases}
    -\xi + \sqrt{2}\mu &\text{if }|\xi| \le \sqrt{2},\\
    -\frac{\xi^3}{2} + |\xi| \mu &\text{if }|\xi| > \sqrt{2},
  \end{cases} &
  g(t,\xi,\mu) = \bar g(\xi,\mu) &=
                 \begin{cases}
                   2\sqrt{2}\xi^2 \mu &\text{if }|\xi| \le \sqrt{2},\\
                   \xi^3 &\text{if }|\xi| > \sqrt{2},
                 \end{cases}
\end{align*}
Note that under zero input, the flow is stabilizing and the jumps are destabilizing. 
We would like to determine a class of impulse-time sequences over which this system is ISS in the standard sense. We thus consider $h^o(t,\xi) = h(t,\xi) = |\xi|$ and search for an ISS Lyapunov function candidate. Taking $V(t,\xi) = \bar V(\xi) = \xi^2/2$, we compute
\begin{align*}
  D_f^+ \bar V(\xi,\mu) &=
  \begin{cases}
    -\xi^2 + \sqrt{2}\xi\mu &\text{if }|\xi| \le \sqrt{2},\\
    -\frac{\xi^4}{2} + \xi^2 \mu &\text{if }|\xi| > \sqrt{2},
  \end{cases} 
  &\bar V(\xi + \bar g(\xi,\mu)) &=
                          \begin{cases}
                            (\xi + 2\sqrt{2}\xi^2 \mu)^2/2 &\text{if }|\xi| \le \sqrt{2},\\
                            (\xi + \xi^3)^2/2 &\text{if }|\xi| > \sqrt{2}.
                          \end{cases}
\end{align*}
If it happens that
\begin{align*}
  |\mu| \le \eta(|\xi|) :=
  \begin{cases}
    \dfrac{|\xi|}{2\sqrt{2}} & \text{if }|\xi| \le \sqrt{2},\\[3mm]
    \dfrac{|\xi|^2}{4} & \text{if }|\xi| > \sqrt{2},
  \end{cases}
\end{align*}
then
\begin{align*}
  D_f^+ \bar V(\xi,\mu) &\le
  \begin{cases}
    -\xi^2/2 = -\bar V(\xi)  & \text{if }|\xi| \le \sqrt{2},\\
    -\xi^4/4 = -\bar V(\xi)^2 & \text{if }|\xi| > \sqrt{2},
  \end{cases}\\
  \bar V(\xi + \bar g(\xi,\mu)) &\le (|\xi| + |\xi|^3)^2/2 = \psi(\bar V(\xi)), &
  \text{with }\psi(s) &= s + 4s^2 + 4s^3.
\end{align*}
Since $\eta \in \Ki$, we may define $\chi = (\eta^{-1})^2/2 \in \Ki$. Define also $\varphi,p : \R_{\ge 0} \to \R_{\ge 0}$ via
 \begin{align*}
  \varphi(s) &:=
  \begin{cases}
    s &\text{if }s \le 1,\\
    s^2 &\text{if }s>1,
  \end{cases} & p(s) &\equiv 1.
\end{align*}
Also, we have $\bar V(\xi) \le \chi(|\mu|)$ if and only if $|\xi| \le \eta^{-1}(|\mu|)$. As a consequence, whenever $\bar V(\xi) \le \chi(|\mu|)$ then
\begin{align*}
  \bar V(\xi+g(t,\xi,\mu)) \le \left(\eta^{-1}(|\mu|) + [\eta^{-1}(|\mu|)]^3\right)^2/2 =: \pi(|\mu|).
\end{align*}
It follows that $V$ is an ISS Lyapunov function candidate with $\chi,\pi,\varphi$ and $\psi$ as defined. To apply Theorem~\ref{thm:main}, we compute
\begin{align*}
  \int_a^{\psi(a)} \frac{ds}{\varphi(s)} =
  \begin{cases}
    \ln\left(\frac{\psi(a)}{a}\right) &\text{if }\psi(a) \le 1,\\
    \frac{1}{a}-\frac{1}{\psi(a)} &\text{if }a\ge 1,\\
    \ln\left(\frac{1}{a}\right) + 1 - \frac{1}{\psi(a)} &\text{if }a < 1 < \psi(a).
  \end{cases}
\end{align*}
and we obtain
\begin{align*}
  M &= \sup_{a>0} \int_a^{\psi(a)} \frac{ds}{\varphi(s)} = \int_{0.5}^{\psi(0.5)} \frac{ds}{\varphi(s)} \approx 1.1931,\\
  N &= \inf_{t\ge 0} \int_t^{t+\theta} p(s) ds = \theta.
\end{align*}
Application of Theorem~\ref{thm:main} gives that the system is strongly ISS over any $\S \subset \Gamma$ such that $\S \usubset \Sup(1/\theta)$ whenever $\theta > M$. 

\subsection{Stabilizing impulses}
\label{sec:ex-stab-imp}

We consider a scalar time-varying impulsive system with no inputs and flow and jump maps given by (\ref{eq:is}) with
\begin{align*}
  f(t,\xi,\mu) &= \tanh(t) \tanh(\xi),\\
  g(t,\xi,\mu) &=
                 \begin{cases}
                   -\xi + \xi^3/2 &\text{if }|\xi|\le 1,\\
                   -\xi + \xi^{1/3}/2 &\text{if }|\xi| > 1.
                 \end{cases}
\end{align*}
The flow map is a time-varying multiple of that in Example~1 of \citet{Automatica2019} and the jump map coincides. We would like to know the largest class of impulse-time sequences over which the impulsive system is GUAS, according to Theorem~\ref{thm:main2}. Since the system has no inputs, we take $m=0$ and consider $u\in\R^0$. Since standard GUAS is considered, we take $h^o(t,\xi) = h(t,\xi) = |\xi|$. Taking $V(t,\xi) = \bar V(\xi) = |\xi|$, it follows that $V$ is a $(h^o,h)$-ISS Lyapunov function candidate with 
\begin{align*}
  p(t) &= \tanh(t), & \varphi(s) &= -\tanh(s),
  &\psi(s) &=
            \begin{cases}
              s^3/2 &\text{if }0\le s\le 1,\\
              s^{1/3}/2 &\text{if }s>1,
            \end{cases}
\end{align*}
and arbitrary $\chi,\pi\in\Ki$. We have
\begin{align*}
  \sup_{t\ge 0} \int_t^{t+\theta} \tanh(s) ds = \theta,
\end{align*}
and as in Example~1 of \citet{Automatica2019},
\begin{align*}
  \inf_{a>0} \int_{\psi(a)}^a \frac{ds}{-\varphi(s)} = \ln(1+e)-\frac{1}{2} \approx 0.81.
\end{align*}
Also, (\ref{eq:nofet}) is satisfied. Theorem~\ref{thm:main2} ensures that the impulsive system is strongly $(h^0,h)$-ISS, hence in this case strongly GUAS, over any class $\S\subset\Gamma$ satisfying $\S\usubset\Sdn(1/\theta)$ provided that $0 < \theta < \ln(1+e)-\frac{1}{2}$. Therefore, strong GUAS is ensured over any uniform subset of the class of impulse-time sequences with i.f. eventually uniformly lower bounded by $1/\theta$. For comparison with Example~1 of \citet{Automatica2019}, take $\theta = 0.8$. According to Lemma~\ref{lem:relats}, items~\ref{item:fixed-euup}) and~\ref{item:pers-pert}), the classes $\S$ satisfying $\S\usubset\Sdn(1/\theta)$ are much larger than what is covered by the results in \citet{Automatica2019}. In particular, the impulsive system is GUAS over sequences with consecutive impulses much more separated than the value $3/2$ in Example~1 of \citet{Automatica2019} provided that the i.f. eventually becomes uniformly lower bounded by $1/\theta = 1.25$. In addition, note that the results of \citet{Automatica2019}, by contrast to the current ones, only ensure standard, i.e. weak, GUAS that is not uniform w.r.t. initial time.

\section{Conclusions}
\label{sec:conclusions}

We have developed novel sufficient conditions for the weak or strong ISS of nonlinear time-varying impulsive systems with inputs, employing a two-measure framework. These conditions generalize, extend, and strengthen many existing results by ensuring ISS that holds uniformly over impulse-time sequences having eventually uniformly bounded impulse frequency. We show that the considered classes of impulse-time sequences are broader than most other sequence classes considered in the literature. In particular, sequences with fixed and average dwell times, as well as sequences where the impulse frequency achieves uniform convergence to a limit (superior or inferior) are all covered. 

\section*{Acknowledgments}

Work partially supported by Agencia Nacional de Promoci\'on Cient\'{\i}fica y Tecnol\'ogica, Argentina, under grant PICT 2018-01385.

\section*{References}

\bibliography{references}

\begin{thebibliography}{39}
\expandafter\ifx\csname natexlab\endcsname\relax\def\natexlab#1{#1}\fi
\expandafter\ifx\csname url\endcsname\relax
  \def\url#1{\texttt{#1}}\fi
\expandafter\ifx\csname urlprefix\endcsname\relax\def\urlprefix{URL }\fi

\bibitem[{Cai and Teel(2005)}]{caitee_cdc05}
Cai, C., Teel, A.~R., 2005. Results on input-to-state stability for hybrid
  systems. In: Proc. 44th IEEE Conf. on Decision and Control, Seville, Spain.
  pp. 5403--5408.

\bibitem[{Cai and Teel(2009)}]{caitee_scl09}
Cai, C., Teel, A.~R., 2009. Characterizations of input-to-state stability for
  hybrid systems. Systems and Control Letters 58~(1), 47--53.

\bibitem[{Chatterjee and Liberzon(2006)}]{chalib_jco06}
Chatterjee, D., Liberzon, D., 2006. Stability analysis of deterministic and
  stochastic switched systems via a comparison principle and multiple
  {L}yapunov functions. SIAM J. Control and Optimization 45~(1), 174--206.

\bibitem[{Dashkovskiy and Feketa(2016)}]{DaF16}
Dashkovskiy, S., Feketa, P., 2016. Input-to-state stability of impulsive
  systems with different jump maps. IFAC-PapersOnLine 49~(18), 1073 -- 1078,
  10th IFAC Symposium on Nonlinear Control Systems NOLCOS 2016.

\bibitem[{Dashkovskiy and Feketa(2017)}]{DF17}
Dashkovskiy, S., Feketa, P., 2017. Input-to-state stability of impulsive
  systems and their networks. Nonlinear Analysis: Hybrid Systems 26, 190 --
  200.

\bibitem[{Dashkovskiy et~al.(2012)Dashkovskiy, Kosmykov, Mironchenko, and
  Naujok}]{DKMN12}
Dashkovskiy, S., Kosmykov, M., Mironchenko, A., Naujok, L., 2012. Stability of
  interconnected impulsive systems with and without time delays, using
  {L}yapunov methods. Nonlinear Analysis: Hybrid Systems 6~(3), 899--915.

\bibitem[{Dashkovskiy and Mironchenko(2013)}]{DM13}
Dashkovskiy, S., Mironchenko, A., 2013. Input-to-state stability of nonlinear
  impulsive systems. SIAM Journal on Control and Optimization 51~(3),
  1962--1987.

\bibitem[{Feketa and Bajcinca(2018)}]{ECC18}
Feketa, P., Bajcinca, N., 2018. Stability of nonlinear impulsive differential
  equations with non-fixed moments of jumps. In: Proceedings of the 2018
  European Control Conference (ECC), Limassol, Cyprus. pp. 900--905.

\bibitem[{Feketa and Bajcinca(2019{\natexlab{a}})}]{feketa2019average}
Feketa, P., Bajcinca, N., 2019{\natexlab{a}}. Average dwell-time for impulsive
  control systems possessing {ISS}-{L}yapunov function with nonlinear rates.
  In: 2019 18th European Control Conference (ECC). IEEE, pp. 3686--3691.

\bibitem[{Feketa and Bajcinca(2019{\natexlab{b}})}]{Automatica2019}
Feketa, P., Bajcinca, N., 2019{\natexlab{b}}. On robustness of impulsive
  stabilization. Automatica 104, 48--56.

\bibitem[{Feng et~al.(2017)Feng, Yu, Li, Huang, and Che}]{feng2017linear}
Feng, Y., Yu, J., Li, C., Huang, T., Che, H., 2017. Linear impulsive control
  system with impulse time windows. Journal of Vibration and Control 23~(1),
  111--118.

\bibitem[{Haimovich and Mancilla-Aguilar(2018)}]{haiman_aadeca18}
Haimovich, H., Mancilla-Aguilar, J.~L., 2018. A characterization of {iISS} for
  time-varying impulsive systems. In: Argentine Conference on Automatic Control
  (AADECA). Buenos Aires, Argentina, pp. 1--6.

\bibitem[{Haimovich and
  Mancilla-Aguilar(2019{\natexlab{a}})}]{haiman_auto19c_arxiv}
Haimovich, H., Mancilla-Aguilar, J.~L., 2019{\natexlab{a}}. Nonrobustness of
  asymptotic stability of impulsive systems with inputs. Submitted to
  Automatica. Available at http://arxiv.org/abs/1910.07610.

\bibitem[{Haimovich and
  Mancilla-Aguilar(2019{\natexlab{b}})}]{haimovich2019strong}
Haimovich, H., Mancilla-Aguilar, J.~L., 2019{\natexlab{b}}. Strong {ISS}
  implies strong i{ISS} for time-varying impulsive systems. Submitted to
  Automatica. Available at http://arxiv.org/abs/1909.00858.

\bibitem[{Hale(1980)}]{hale_book80}
Hale, J.~K., 1980. Ordinary Differential Equations. Robert E. Krieger
  Publishing Company, Malabar, Florida.

\bibitem[{Hespanha et~al.(2005)Hespanha, Liberzon, and Teel}]{HLT05}
Hespanha, J.~P., Liberzon, D., Teel, A.~R., 2005. On input-to-state stability
  of impulsive systems. In: Proc. of the 44th IEEE Conf. on Decision and
  Control and 2005 European Control Conf. pp. 3992--3997.

\bibitem[{Hespanha et~al.(2008)Hespanha, Liberzon, and Teel}]{HLT08}
Hespanha, J.~P., Liberzon, D., Teel, A.~R., 2008. Lyapunov conditions for
  input-to-state stability of impulsive systems. Automatica 44~(11),
  2735--2744.

\bibitem[{Lakshmikantham et~al.(1989)Lakshmikantham, Bainov, and
  Simeonov}]{LBS89}
Lakshmikantham, V., Bainov, D., Simeonov, P.~S., 1989. Theory of impulsive
  differential equations. World Scientific.

\bibitem[{Liberzon et~al.(2014)Liberzon, Ne\v{s}i\'c, and Teel}]{libnes_tac14}
Liberzon, D., Ne\v{s}i\'c, D., Teel, A.~R., 2014. Lyapunov-based small-gain
  theorems for hybrid systems. IEEE Trans. on Automatic Control 59~(6),
  1395--1410.

\bibitem[{Lin et~al.(1996)Lin, Sontag, and Wang}]{linson_jco96}
Lin, Y., Sontag, E.~D., Wang, Y., 1996. A smooth converse {L}yapunov theorem
  for robust stability. SIAM J. Control and Optimization 34~(1), 124--160.

\bibitem[{Liu et~al.(2011)Liu, Liu, and Xie}]{r1}
Liu, J., Liu, X., Xie, W.-C., 2011. Input-to-state stability of impulsive and
  switching hybrid systems with time-delay. Automatica J. IFAC 47~(5),
  899--908.

\bibitem[{Liu et~al.(2012)Liu, Liu, and Xie}]{liu2012class}
Liu, J., Liu, X., Xie, W.-C., 2012. Class-{KL} estimates and input-to-state
  stability analysis of impulsive switched systems. Systems \& Control Letters
  61~(6), 738--746.

\bibitem[{Mancilla-Aguilar and Haimovich(2019)}]{mancilla2019uniform}
Mancilla-Aguilar, J.~L., Haimovich, H., 2019. Uniform input-to-state stability
  for switched and time-varying impulsive systems. Submitted to IEEE TAC.
  Available at http://arxiv.org/abs/1904.03440.

\bibitem[{Ning et~al.(2018)Ning, He, Wu, and Zhou}]{ning2018indefinite}
Ning, C., He, Y., Wu, M., Zhou, S., 2018. Indefinite lyapunov functions for
  input-to-state stability of impulsive systems. Information Sciences 436,
  343--351.

\bibitem[{Peng(2018)}]{peng2018lyapunov}
Peng, S., 2018. Lyapunov--{K}rasovskii-type criteria on {ISS} and i{ISS} for
  impulsive time-varying delayed systems. IET Control Theory \& Applications
  12~(11), 1649--1657.

\bibitem[{Peng et~al.(2018)Peng, Deng, and Zhang}]{peng2018unified}
Peng, S., Deng, F., Zhang, Y., 2018. A unified {R}azumikhin-type criterion on
  input-to-state stability of time-varying impulsive delayed systems. Systems
  \& Control Letters 116, 20--26.

\bibitem[{Ren and Xiong(2017)}]{ren2017stability}
Ren, W., Xiong, J., 2017. Stability analysis of impulsive stochastic nonlinear
  systems. IEEE Trans. Autom. Control 62~(9), 4791--4797.

\bibitem[{Rivadeneira and Moog(2015)}]{rivadeneira2015observability}
Rivadeneira, P.~S., Moog, C.~H., 2015. Observability criteria for impulsive
  control systems with applications to biomedical engineering processes.
  Automatica 55, 125--131.

\bibitem[{Rogovchenko(1997)}]{rogovchenko1997nonlinear}
Rogovchenko, Y.~V., 1997. Nonlinear impulse evolution systems and applications
  to population models. Journal of Mathematical Analysis and Applications
  207~(2), 300--315.

\bibitem[{Samoilenko and Perestyuk(1987)}]{SP87}
Samoilenko, A., Perestyuk, N., 1987. Differential equations with impulse
  effect. Visca Skola, Kiev.

\bibitem[{Samoilenko and Perestyuk(1995)}]{SaP95}
Samoilenko, A.~M., Perestyuk, N.~A., 1995. Impulsive differential equations.
  Vol.~14 of World Scientific Series on Nonlinear Science. World Scientific
  Publishing Co., Inc., River Edge, NJ.
\newline\urlprefix\url{http://dx.doi.org/10.1142/9789812798664}

\bibitem[{Sontag(1989)}]{Son89}
Sontag, E.~D., 1989. Smooth stabilization implies coprime factorization. IEEE
  Trans. on Automatic Control 34~(4), 435--443.

\bibitem[{Sontag and Wang(2000)}]{sonwan_siamjco00}
Sontag, E.~D., Wang, Y., 2000. Lyapunov characterizations of input to output
  stability. SIAM J. Control and Optimization 39~(1), 226--249.

\bibitem[{Sun and Wang(2012)}]{r3}
Sun, X.-M., Wang, W., 2012. Integral input-to-state stability for hybrid
  delayed systems with unstable continuous dynamics. Automatica J. IFAC 48~(9),
  2359--2364.

\bibitem[{Tan et~al.(2015)Tan, Li, and Huang}]{Tan2015}
Tan, J., Li, C., Huang, T., Dec 2015. Stability of impulsive systems with time
  window via comparison method. International Journal of Control, Automation
  and Systems 13~(6), 1346--1350.
\newline\urlprefix\url{https://doi.org/10.1007/s12555-014-0197-y}

\bibitem[{Tang et~al.(2015)Tang, Xing, Karimi, Kocarev, and
  Kurths}]{tang2015tracking}
Tang, Y., Xing, X., Karimi, H.~R., Kocarev, L., Kurths, J., 2015. Tracking
  control of networked multi-agent systems under new characterizations of
  impulses and its applications in robotic systems. IEEE Transactions on
  Industrial Electronics 63~(2), 1299--1307.

\bibitem[{Wu et~al.(2016)Wu, Tang, and Zhang}]{wu2016input}
Wu, X., Tang, Y., Zhang, W., 2016. Input-to-state stability of impulsive
  stochastic delayed systems under linear assumptions. Automatica 66, 195--204.

\bibitem[{Yang et~al.(2019)Yang, Peng, Lv, and Li}]{yang2019recent}
Yang, X., Peng, D., Lv, X., Li, X., 2019. Recent progress in impulsive control
  systems. Mathematics and Computers in Simulation 155, 244--268.

\bibitem[{Yao et~al.(2014)Yao, Qiu, and Shen}]{yao2014input}
Yao, F., Qiu, L., Shen, H., 2014. On input-to-state stability of impulsive
  stochastic systems. J. Franklin Inst. 351~(9), 4636--4651.

\end{thebibliography}
\end{document}